\documentclass[journal]{IEEEtran}
\usepackage{amsmath,mathtools,graphicx,amsthm,amssymb,float,color,array,epsfig,subcaption}

\newcommand{\avec}{{\bf{a}}}

\newcommand{\bvec}{{\bf{b}}}

\newcommand{\yvec}{{\bf{y}}}

\newcommand{\wvec}{{\bf{w}}}
\newcommand{\xvec}{{\bf{x}}}

\newcommand{\rvec}{{\bf{r}}}

\newcommand{\hvec}{{\bf{h}}}

\newcommand{\onevec}{{\bf{1}}}
\newcommand{\zerovec}{{\bf{0}}}

\newcommand{\muvec}{{\bf{\mu}}}
\newcommand{\thetavec}{{\bf{\theta}}}

\newcommand{\Amat}{{\bf{A}}}

\newcommand{\Cmat}{{\bf{C}}}
\newcommand{\Dmat}{{\bf{D}}}

\newcommand{\Gmat}{{\bf{G}}}
\newcommand{\Hmat}{{\bf{H}}}

\newcommand{\Imat}{{\bf{I}}}

\newcommand{\Rmat}{{\bf{R}}}

\newcommand{\Wmat}{{\bf{W}}}

\newcommand{\define}{\stackrel{\triangle}{=}}


\newcommand{\Phimat}{\mbox{\boldmath $\Phi$}}

\def\Sigmavec{{\mbox{\boldmath $\Sigma$}}}



\def\thetavec{{\mbox{\boldmath $\theta$}}}

\def\muvec{{\mbox{\boldmath $\mu$}}}

\def\thetavecsmall{{\mbox{\boldmath {\scriptsize $\theta$}}}}

\newcommand{\be}{\begin{equation}}
\newcommand{\ee}{\end{equation}}
\newcommand{\beqna}{\begin{eqnarray}}
\newcommand{\eeqna}{\end{eqnarray}}


\usepackage{enumerate}
\usepackage{float}

\usepackage{multirow}
\usepackage{cite}
\usepackage{graphicx,nicefrac}
\usepackage[ruled,vlined]{algorithm2e}

\newtheorem{theorem}{Theorem}
\newtheorem{proposition}{Proposition}

\usepackage{xcolor}

\begin{document}
\title{Resource Allocation and Dithering of Bayesian Parameter Estimation Using Mixed-Resolution Data}
\author{Itai~E.~Berman,~\IEEEmembership{Student Member, IEEE} and 
Tirza Routtenberg, \IEEEmembership{Senior Member, IEEE}
\thanks{This work is partially supported by  the ISRAEL SCIENCE FOUNDATION (ISF), grant No. 1173/16.}
\thanks{{\footnotesize{I. Berman and T. Routtenberg are with the School of Electrical and Computer Engineering Ben-Gurion University of the Negev Beer-Sheva 84105, Israel, e-mail: itaieliy@post.bgu.ac.il, tirzar@bgu.ac.il.}}}
}

\maketitle

\begin{abstract}
Quantization of signals is an integral part of modern signal processing applications, such as sensing, communication, and inference. While signal quantization provides many physical advantages, it usually degrades the subsequent estimation performance that is based on quantized data. In order to maintain physical constraints and simultaneously bring substantial performance gain, in this work we consider systems with mixed-resolution, 1-bit quantized and continuous-valued, data.
First, we describe the linear minimum mean-squared error (LMMSE) estimator and its associated mean-squared error (MSE) for the general mixed-resolution model.
However, the MSE of the LMMSE requires matrix inversion in which the number of measurements defines the matrix dimensions  and thus, is not a tractable tool for optimization and system design. Therefore, we present
the linear Gaussian orthonormal (LGO) measurement model and derive a closed-form analytic expression for the MSE of the LMMSE estimator under this model. In addition, we present two common special cases of the LGO model: 1) scalar parameter estimation and 2) channel estimation in mixed-ADC multiple-input multiple-output (MIMO) communication systems.
We then solve the resource allocation optimization problem of the LGO model with the proposed tractable form of the MSE as an objective function and under a power constraint using a one-dimensional search.
Moreover, we present the concept of dithering for mixed-resolution models and optimize the dithering noise as part of the resource allocation optimization problem for two dithering schemes: 1) adding noise only to the quantized measurements and 2) adding noise to both measurement types.
Finally, we present simulations that demonstrate the advantages of using mixed-resolution measurements and the possible improvement introduced with dithering and resource allocation.
\end{abstract}

\begin{IEEEkeywords}
	Massive MIMO, resource allocation, mixed-ADC, linear minimum mean-squared error, dithering
\end{IEEEkeywords}

\IEEEpeerreviewmaketitle

\section{Introduction} \label{I}
Traditional statistical signal processing (SSP) techniques were developed for high-resolution sensors,
under the  unrealistic assumption of infinite precision sampling, or ``analog" data,
that can neglect the quantization effect. High-resolution sensors result in high  performance in various SSP tasks, such as parameter estimation.
In modern signal processing, signal quantization plays an important role 
with   various applications, including  wireless sensor networks (WSNs)
\cite{Ribeiro_Giannakis_2006,Ribeiro_Giannakis2006P2,chavali2012managing,Saska_Blum_Kaplan2015,mandic2008signal}, direction of arrival estimation \cite{corey2017wideband}, target tracking \cite{Ribeiro_Giannakis_2010,7163356,4524043}, multiple-input multiple-output (MIMO) communications \cite{li2017channel,park2017optimization,zhang2016mixed,liang2016mixed,Choi_Heath_2016,pirzadeh2018spectral,shlezinger2019asymptotic,marzetta2010noncooperative}, cognitive radio \cite{Lunden_Koivunen_Poor2015}, and array processing \cite{bar_Weiss_2002}.
For example, in communication systems there is usually a need for cheaper, less power-hungry analog-to-digital converters (ADCs) while maintaining accurate channel estimation \cite{liang2016mixed}.
The widespread use of signal quantization is due to its many advantages, which include reduction of hardware complexity, power consumption, communication bandwidth, sensor cost, and sensor's physical dimensions, as well as enabling  high-rate sampling \cite{walden1999analog,1550190}.
Despite all its practical advantages, quantization results in low-resolution signals, which degrades the performance of subsequent parameter estimation that is based on the quantized data.
Moreover, signal quantization introduces nonlinear effects into the system, which poses new challenges for parameter estimation that relies on these signals, such as non-convex optimizations \cite{Papadopoulos_Wornell_Oppenheim_2001,Stein_Bar_2018}.

The nonlinear problem of parameter estimation based on low-precision samples, especially from 1-bit (signed) measurements has been discussed widely in the literature (see, e.g. \cite{Stein_Bar_2018,Madsen_Handel_2000,Kipnis_2019,8392734,stein2019spectral}).
For instance, in \cite{Ribeiro_Giannakis_2006,Ribeiro_Giannakis2006P2,7111345}
the maximum-likelihood (ML) estimator and the corresponding Cram$\acute{\text{e}}$r-Rao bounds (CRBs) for quantized samples  are presented. However, the ML is usually intractable, and, thus,
various suboptimal, low-complexity methods have been developed in the literature \cite{Choi_Heath_2016,Ren_Stoica_2019,bar_Weiss_2002,Madsen_Handel_2000}.
In \cite{zeitler2012bayesian} the problem of parameter estimation of a random parameter using 1-bit dithered measurements is studied, deriving lower bounds on the mean-squared error (MSE) using the Bayesian CRB and designing dither strategies.
Studies on channel estimation in massive MIMO systems with 1-bit ADCs show acceptable performance in channel capacity and the achievable rate due to the use of a large number of antennas compared to that of analog ADCs \cite{li2017channel,jacobsson2015one,wan2020generalized}.
In all these methods the use of quantized data results in a degradation of the estimation performance compared with the analog-data based methods.

In addition to purely-quantized or purely-analog data, a few works have been using schemes with multiple quantization resolution data \cite{liang2016mixed,harel2017non,Saska_Blum_Kaplan2015}. 
For example, the ML and CRB for non-Bayesian estimation  with partially quantized observations is considered in \cite{harel2017non}. For the Bayesian case, the minimum MSE (MMSE) estimation of a uniformly distributed parameter, based on both quantized and unquantized observations, has been suggested in \cite{Saska_Blum_Kaplan2015} and linear MMSE (LMMSE) in specific applications is discussed in
\cite{park2017optimization,zhang2016mixed,pirzadeh2018spectral}.
However, while there are various estimation algorithms based on quantized signals (see above),
there has been less emphasis on the analysis and design of mixed-resolution
architectures. Considerable improvements could be obtained by optimization of 
estimation schemes relying on both quantized and continuous-valued data with respect to their parameters, performance, and complexity. This optimization is crucial  in order   to cope with the limited resources for data processing, storage, and communication  in real-world applications.
However,  incorporating quantized data results in non-trivial operations, creating a need for new tools.
A main example is that, in contrast with continuous value data, using dithering, i.e. adding
noise to a signal prior to its quantization, improves the estimation performance based on this signal \cite{Papadopoulos_Wornell_Oppenheim_2001,gustafsson2013generating,dabeer2008multivariate}. 
However, while there are various estimation algorithms based on quantized
signals, there has been less emphasis on the analysis and design of mixed-resolution architectures.


In this work, we consider Bayesian parameter estimation in systems with mixed-resolution, analog and 1-bit quantized, measurements. We develop the LMMSE estimator and its associated MSE for the considered model.
We present the linear Gaussian orthonormal (LGO) measurement model, which is shown to generalize common schemes, including: 1) scalar parameter estimation and 2) channel estimation in mixed-ADC massive MIMO communication systems. A closed-form analytic expression of the MSE of the LMMSE estimator is derived under the LGO measurement model. The resource allocation problem is formalized under the LGO model with the tractable expression of the MSE as the objective function and under power constraints, and solved using a one-dimensional search over value pairs. The concept of dithering, the addition of noise to the measurements before quantization, is presented for the mixed-resolution scheme and the resource allocation problem is solved while also optimizing the dithering noise.
Finally, simulations for the LGO model have been conducted and have shown the advantages of mixed-resolution estimation compared with purely-quantized or purely-analog settings, for the scalar case and for channel estimation in massive MIMO. In addition, the possible improvement from dithering can be seen even when adding the noise to both the analog and quantized measurements.

The remainder of the paper is organized as follows: Section \ref{II} presents the general mixed-resolution measurement model and the resource allocation problem which is shown to be computationally tedious. In Section \ref{III}, the LGO measurement model is presented and the resource allocation problem is solved for the LGO model, including dithering design. In Section \ref{IV}, special cases of the LGO model are discussed. Simulations of the resource allocation method are given in Section \ref{V}. Finally, our conclusions can be found in Section \ref{VI}.

\textit{Notation:} We use boldface lowercase letters to denote vectors and boldface capital letters for matrices. The identity matrix of size $M\times M$ is denoted by $\Imat_M$ and vector of ones of length $N$ is denoted by $\onevec_N$. The symbols $(\cdot)^*$,$(\cdot)^T$, and $(\cdot)^H$ represent the conjugate, transpose, and conjugate transpose operators, respectively. The symbol $\otimes$ is the Kronecker product.  We use $\text{trace}\left(\Amat\right)$ to denote the trace of the  matrix $\Amat$, and $\text{diag}\left(\Amat\right)$ to denote a diagonal matrix containing only the diagonal elements of $\Amat$. The $\text{arcsin}(\cdot)$ fucntion, when applied  to a vector or matrix, is applied elementwisely.
The distribution of a circularly symmetric complex Gaussian random vector with mean $\muvec$ and covariance matrix $\Sigmavec$ is denoted by $\mathcal{CN}(\muvec,\Sigmavec)$ and from here on noted as complex Gaussian.  We denote the covariance matrix of a vector $\avec$ as $\Cmat_\avec = {\rm{E}}[\avec \avec^H]$ and correlation between vectors $\avec$ and $\bvec$ as $\Cmat_{\avec\bvec}={\rm{E}}[\avec \bvec^H]$. The set of non-negative integers is denote by $\mathbb{Z}_+$. The 1-bit element-wise quantization function is applied separately on the real, $\text{Re}(z)$, and imaginary, $\text{Im}(z)$, part of any complex number $z\in\mathbb{C}$, and is defined as
\be \label{I.1}
	\mathcal{Q}(z) = \frac{1}{\sqrt{2}} \left[ \begin{cases} 
		\hspace{2.5mm} 1 \:, \text{Re}(z) \geq 0\\
		-1  \:, \text{Re}(z) < 0
	\end{cases} \hspace{-3mm}+ j\begin{cases} 
		\hspace{2.5mm} 1 \:, \text{Im}(z) \geq 0\\
		-1  \:, \text{Im}(z) < 0
	\end{cases}	\hspace{-3mm}\right].
\ee

\section{System Model} \label{II}
In this section we present the system model and introduce the problem of parameter estimation using mixed-resolution measurements. In Subsection \ref{II.A} the measurement model is presented and in Subsection \ref{II.B}, the LMMSE estimator and its associated MSE are derived for the discussed model. Finally, in Subsection \ref{II.C} we present the resource allocation optimization problem and discuss the difficulties of solving the problem.

\subsection{General Mixed-Resolution Measurement Model} \label{II.A}
We consider the problem of estimating a random parameter vector based on mixed-resolution data. In particular, we assume a parameter vector, $\thetavec\in\mathbb{C}^M$, with a zero-mean complex Gaussian distribution, $\thetavec\sim\mathcal{CN}(\zerovec,\Sigmavec_\thetavecsmall)$, where $\Sigmavec_\thetavecsmall$ is a known positive definite covariance matrix. The goal is to  estimate $\thetavec$ from a linear measurement model having both analog, high-resolution measurements:
\be \label{II.A.1}
	\xvec_a = \Hmat\thetavec + \wvec_a,
\ee
and quantized, low-resolution measurements:
\be	\label{II.A.2}
	\xvec_q = \mathcal{Q}\left(\Gmat\thetavec + \wvec_q \right),
\ee
where the quantization operator, $\mathcal{Q}(\cdot)$, is defined in \eqref{I.1}. The matrices $\Hmat\in\mathbb{C}^{N_a\times M}$ and $\Gmat\in\mathbb{C}^{N_q\times M}$ are known, with $N_a$ and $N_q$ being the number of analog and quantized measurements, respectively, and the added noise vectors, $\wvec_a$ and $\wvec_q$, are independent, zero-mean, complex Gaussian noise, i.e. $\wvec_a \sim \mathcal{CN}(\zerovec,\sigma_a^2 \Imat_{N_a})$ and $\wvec_q \sim \mathcal{CN}(\zerovec,\sigma_q^2 \Imat_{N_q})$. It is also assumed that the noise vectors, $\wvec_a$ and $\wvec_q$, and the unknown parameter vector, $\thetavec$, are mutually independent.
As a result, the analog measurements follow a complex Gaussian distribution, i.e. $\xvec_a \sim \mathcal{CN} \left(\zerovec, \Cmat_{\xvec_a}\right)$, with the covariance matrix
\be \label{II.A.3}
	\Cmat_{\xvec_a} = \Hmat\Sigmavec_\thetavecsmall\Hmat^H +\sigma_a^2 \Imat_{N_a}.
\ee
Similarly, the vector
\be	\label{II.A.4}
	\yvec \define \Gmat \thetavec + \wvec_q
\ee
is a complex Gaussian vector, i.e. $\yvec\sim \mathcal{CN} \left(\zerovec, \Cmat_\yvec\right)$, with the covariance matrix
\be \label{II.A.5}
	\Cmat_\yvec= \Gmat\Sigmavec_\thetavecsmall\Gmat^H+\sigma_q^2 \Imat_{N_q}.
\ee	
In particular, \eqref{II.A.2} implies that ${\rm{E}}[\xvec_q] = \zerovec$. However, the distribution of the quantized measurements $\xvec_q$ in \eqref{II.A.2} does not have a closed-form expression for the general case.
The goal is to use mixed-resolution measurements, i.e. the augmented vector
\be \label{II.A.6}
	\xvec \define \begin{bmatrix}
		\xvec_a^T & \xvec_q^T
	\end{bmatrix}^T,
\ee
to estimate $\thetavec$ efficiently. 

The considered model is fundamental in various signal processing applications with mixed-resolution data. An important case of this model, which is discussed in detail in Subsection \ref{IV.B}, is channel estimation in massive MIMO communication systems.
In this case, there are multiple users transmitting data to multiple antennas and the goal is to estimate the channel between users and each antenna, which is the unknown parameter vector $\thetavec$ in this case. Due to system limitation, such as sensor power consumption, sensor cost, and channel capacity, part of the observed data is quantized at the antenna and sent the fusion center for estimation.
This scenario, presented schematically in Fig. \ref{II.A.Fig1}, can be interpreted as a joint distributed-centralized estimation setup in a MIMO communication system with partially-quantized measurements.
\begin{figure}[htb]
	\centering
	\includegraphics[width=1\linewidth]{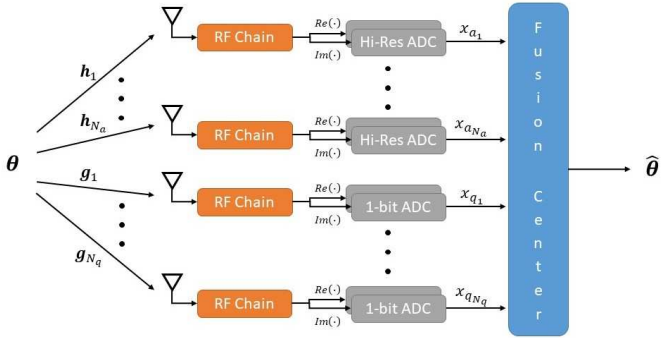}
	\caption{Schematic system model of estimation with mixed-resolution data. A data vector, $\thetavec$, is transmitted and received over a known channel and undergoes either a high- or 1-bit low-resolution quantization. The unknown data vector is then estimated from the mixed-resolution measurements in the fusion center.}
	\label{II.A.Fig1}
\end{figure}

\subsection{LMMSE Estimation} \label{II.B}
The MMSE estimator of $\thetavec$ based on the mixed-resolution data $\xvec$ in \eqref{II.A.6} is given by the conditional expectation, $\hat{\thetavec}^{\text{MMSE}}= {\rm{E}}[\thetavec\vert\xvec]$. Derivation of the MMSE estimator requires an analytic form of the conditional probability distribution function, $f(\thetavec\vert\xvec)$, which does not have a closed-form expression in general in the presence of quantized measurements. Moreover, since the MMSE estimator is a function of both the analog (continuous valued) measurements and the quantized (discrete valued) measurements, then, even the numerical evaluation of $\hat{\thetavec}^{\text{MMSE}}$ is intractable and requires multidimensional numerical integration.
Therefore, usually the LMMSE estimator is used when quantized measurements are involved.

For the sake of simplicity of presentation, in the following, $\hat{\thetavec}$ denotes the LMMSE estimator. For zero-mean measurements, as in our case, and under the assumption that $\Cmat_\xvec$ is a non-singular matrix, the LMMSE estimator, based on both $\xvec_a$ and $\xvec_q$, is given by
\be \label{II.B.1}
	\hat{\thetavec} = \Cmat_{\thetavecsmall \xvec}  \Cmat_{\xvec}^{-1} \: \xvec,
\ee
and the associated MSE of the LMMSE estimator is
\be	\label{II.B.2}
	\begin{aligned}
		MSE &= {\rm{E}}\left[(\hat{\thetavec} - \thetavec)^H(\hat{\thetavec} - \thetavec)\right] \\ &= \text{trace}\left(\Sigmavec_\thetavecsmall\right) - \text{trace}\left(\Cmat_{\thetavecsmall \xvec} \Cmat_{\xvec}^{-1} \Cmat_{\thetavecsmall \xvec}^H \right).
	\end{aligned}
\ee
In Appendix \ref{VII} it is shown that the auto-covariance matrix of $\xvec$ and the cross-covariance matrix of $\xvec$ and $\thetavec$ are block matrices given by
\be \label{II.B.3}
\begin{aligned}
	&\Cmat_\xvec = \left[\begin{matrix*}[l]
		\Hmat\Sigmavec_\thetavecsmall\Hmat^H+\sigma_a^2 \Imat_{N_a} & \vdots\\ \left(\sqrt{\frac{2}{\pi}} \Hmat \Sigmavec_\thetavecsmall \Gmat^H \left(\text{diag}\left(\Cmat_\yvec\right)\right)^{-\frac{1}{2}}\right)^H &\vdots
	\end{matrix*}\right. \\
	&\quad \quad \left. \begin{matrix*}[r]
		\sqrt{\frac{2}{\pi}} \Hmat \Sigmavec_\thetavecsmall \Gmat^H \left(\text{diag}\left(\Cmat_\yvec\right)\right)^{-\frac{1}{2}} \\
		\begin{aligned}
			&\frac{2}{\pi}\left(\text{arcsin}\left(\left(\text{diag}\left(\Cmat_\yvec\right)\right)^{-\frac{1}{2}} \text{Re}(\Cmat_\yvec) \left( \text{diag} \left( \Cmat_\yvec \right) \right)^{-\frac{1}{2}} \right) \right. \\
			&+\left. j\text{arcsin} \left(\left(\text{diag} \left(\Cmat_\yvec\right) \right)^{-\frac{1}{2}} \text{Im}(\Cmat_\yvec) \left( \text{diag} \left( \Cmat_\yvec \right) \right)^{-\frac{1}{2}} \right) \right)
		\end{aligned}
	\end{matrix*}\right]
\end{aligned}
\ee
and
\be \label{II.B.4}
\Cmat_{\thetavecsmall \xvec} = \begin{bmatrix}
	\Sigmavec_\thetavecsmall \Hmat^H & \sqrt{\frac{2}{\pi}} \Sigmavec_\thetavecsmall \Gmat^H \left(\text{diag}\left(\Cmat_\yvec\right)\right)^{-\frac{1}{2}}
\end{bmatrix},
\ee
respectively. By substituting \eqref{II.B.3} and \eqref{II.B.4} into \eqref{II.B.1} and \eqref{II.B.2} we obtain the LMMSE estimator and its associated MSE.

\subsection{Optimization of MSE} \label{II.C}
The main goal in this paper is to find the optimal number of analog and quantized measurements, $N_a^*$ and $N_q^*$, respectively, in the sense of minimum MSE of the associated LMMSE estimator, given in \eqref{II.B.2} under some physical constraints. That is, we aim to solve the following optimization problem,
\be \label{II.C.1}
	\begin{aligned}
		\min_{N_a,N_q} \quad &\text{trace}\left(\Sigmavec_\thetavecsmall\right) - \text{trace}\left(\Cmat_{\thetavecsmall \xvec} \Cmat_{\xvec}^{-1} \Cmat_{\thetavecsmall \xvec}^H \right) \\
		&\text{s.t.} \quad \begin{cases}
			\hvec_1(N_a,N_q)\leq\zerovec\\
			\hvec_2(N_a,N_q)=\zerovec \\
			N_a,N_q \in \mathbb{Z}_+
		\end{cases}
	\end{aligned},
\ee
where $\hvec_1(N_a,N_q)\leq0$ and $\hvec_2(N_a,N_q)=0$ represent different inequality and equality constraints, respectively, which stem from physical system requirements.
The optimization problem in \eqref{II.C.1} is an integer programming problem, which 
often  leads to solutions of combinatorial nature that cannot be solved in a reasonable time, even for small datasets \cite{Boyd_2004}.  Moreover, since the decision variables $N_a$ and $N_q$ represent the dimensions of the matrices $\Cmat_{\thetavecsmall \xvec}$ and $\Cmat_\xvec$ in the objective function of \eqref{II.C.1}, 
the problem cannot be solved by a simple relaxation that allows non-integer rational solutions.
As a result, for each value pair of the decision variables, $N_a$ and $N_q$, we need to calculate the inverse matrix, $\Cmat_\xvec^{-1}$, and perform matrix multiplication, giving a total computational complexity of $\mathcal{O}\left((N_a+N_q)^3+2M(N_a+N_q)^2\right)$, which increases as the number of measurements increases.
This approach may be intractable and may hinder insights into the original problem. 

\section{Optimization for the Orthonormal Measurement Model} \label{III}
In this section we present the LGO measurement model, optimize the resource allocation for this model, and propose the use of dithering.
In Subsection \ref{III.A} we present the LGO model and derive a tractable expression for the MSE under this model. In Subsection \ref{III.B} we discuss some physical constraints common in real-world systems with mixed-resolution measurements and in Subsection \ref{III.C} solve the resource allocation optimization problem under a constraint. In Subsection \ref{III.D} the concept of dithering for the LGO model is discussed and two possible cases of the optimization problem are presented while allowing dithering.

\subsection{Orthonomal Measurement Model} \label{III.A}
The LGO model is the model described in Subsection II.A, which also satisfies the following assumptions:
\begin{enumerate}[{A}.1)]
	\item The elements of $\thetavec$ are uncorrelated with unit variance, i.e. $\Sigmavec_\thetavecsmall = \Imat_M$.
	\item The matrix $\Hmat$ is a block matrix of size $N_a\times M$, where $N_a = M n_a$: 
	\be \label{III.A.1}
		\Hmat = \begin{bmatrix}
		\Hmat_1 & \Hmat_2 & \dots & \Hmat_{n_a}
		\end{bmatrix}^T,
	\ee
	where each block satisfies
	\be \label{III.A.2}
		\Hmat_i^H \Hmat_j = \rho_a\Imat_M \:, \quad i=j
	\ee
	and $\rho_a>0$. If $i \neq j$ then the product $\Hmat_i^H \Hmat_j$ can take arbitrary values.
	\item The matrix $\Gmat$ is a block matrix of size $N_q\times M$, where $N_q = M n_q$, with equal blocks:
	\be \label{III.A.3}
		\Gmat = \onevec_{n_q} \otimes \Gmat_1 ,
	\ee
	where
	\be \label{III.A.4}
		\Gmat_1^H \Gmat_1 = \rho_q\Imat_M \; ,
	\ee
	in which $\rho_q>0$.
\end{enumerate}
\begin{theorem} \label{Theorem}
	The LMMSE estimator for the mixed-resolution model, described in Subsection \ref{II.A} and under Assumptions A.1-A.3,  is
	\be \label{III.A.5}
		\begin{aligned}
			&\hat{\thetavec} = \left[\begin{matrix*}[l]
				\left(\frac{1}{\rho_a n_a + \sigma_a^2} - \frac{2\rho_q n_q\sigma_a^2}{\pi (\rho_q+\sigma_q^2) (\alpha + \beta(n_a) \rho_q n_q)(\rho_a n_a + \sigma_a^2)^2} \right) \Hmat^H & \vdots \end{matrix*}\right. \\
			& \hspace{1.5cm} \left. \begin{matrix*}[r]
			\sqrt{\frac{2}{\pi (\rho_q+\sigma_q^2)}} \frac{\sigma_a^2}{(\alpha + \beta(n_a) \rho_q n_q)(\rho_a n_a + \sigma_a^2)} \Gmat^H
		\end{matrix*}\right] \xvec 
		\end{aligned}
	\ee
	and its associated MSE is
	\be	\label{III.A.6}
		\begin{aligned}
			&MSE = M - M\left(\frac{\rho_a n_a }{\rho_a n_a + \sigma_a^2} \right.\\
			&\hspace{0.6cm} \left. +\frac{2\rho_q n_q \sigma_a^4}{\pi (\rho_q+\sigma_q^2) \left(\alpha + \beta(n_a)\rho_q n_q\right) \left(\rho_a n_a + \sigma_a^2\right)^2}\right)
		\end{aligned},
	\ee
	where 
	\be	\label{III.A.7}
		\alpha \define 1-\frac{2}{\pi}\arcsin\left(\frac{\rho_q}{\rho_q+\sigma_q^2} \right) = \frac{2}{\pi} \arccos\left(\frac{\rho_q}{\rho_q+\sigma_q^2} \right) 
	\ee
	and
	\be	\label{III.A.8}
		\beta(n_a) \define \frac{2}{\pi} \arcsin\left(\frac{\rho_q}{\rho_q+\sigma_q^2} \right) \frac{1}{\rho_q} - \frac{2 \rho_a n_a}{\pi(\rho_q +\sigma_q^2)(\rho_a n_a + \sigma_a^2)}.
	\ee
\end{theorem}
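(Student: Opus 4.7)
The plan is to exploit the block/Kronecker structure imposed by Assumptions A.1--A.3 to bypass any explicit inversion of the $(N_a+N_q)\times(N_a+N_q)$ covariance $\Cmat_\xvec$, by verifying a structured ansatz for the LMMSE matrix via the orthogonality principle.

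I would first substitute A.1--A.3 into \eqref{II.B.3}--\eqref{II.B.4}. From A.3, $\Gmat_1$ is $M\times M$ with $\Gmat_1^H\Gmat_1=\rho_q\Imat_M$, which forces $\Gmat_1\Gmat_1^H=\rho_q\Imat_M$, so that
\begin{equation*}
\Cmat_\yvec=(\rho_q J_{n_q}+\sigma_q^2\Imat_{n_q})\otimes\Imat_M,\qquad J_{n_q}\define\onevec_{n_q}\onevec_{n_q}^T.
\end{equation*}
This matrix is real with uniform diagonal $\rho_q+\sigma_q^2$, so the imaginary $\arcsin$ term in \eqref{II.B.3} vanishes and element-wise $\arcsin$ of the normalized real part collapses to $\Cmat_{\xvec_q}=\alpha\Imat_{N_q}+\gamma(J_{n_q}\otimes\Imat_M)$ with $\gamma=1-\alpha$ in the notation of \eqref{III.A.7}. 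The remaining blocks reduce to $\Cmat_{\xvec_a\xvec_q}=c\,\Hmat\Gmat^H$ and $\Cmat_{\thetavecsmall\xvec}=[\Hmat^H,\,c\,\Gmat^H]$, where $c\define\sqrt{2/[\pi(\rho_q+\sigma_q^2)]}$.

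Next, I would posit the ansatz $W^\star=[a\Hmat^H,\,b\Gmat^H]$ for scalars $a,b$ and verify the normal equations $W^\star\Cmat_\xvec=\Cmat_{\thetavecsmall\xvec}$. The identities $\Hmat^H\Hmat=n_a\rho_a\Imat_M$, $\Gmat^H\Gmat=n_q\rho_q\Imat_M$, and $\Gmat^H(J_{n_q}\otimes\Imat_M)=n_q\Gmat^H$ --- all direct consequences of A.2--A.3 --- collapse each block-column of $W^\star\Cmat_\xvec$ into a scalar multiple of $\Hmat^H$ or $\Gmat^H$, reducing the matrix equation to the $2\times 2$ linear system
\begin{equation*}
a(n_a\rho_a+\sigma_a^2)+bcn_q\rho_q=1,\qquad acn_a\rho_a+b(\alpha+n_q\gamma)=c.
\end{equation*}
Solving this system yields a specific pair $(a,b)$; by uniqueness of the LMMSE (which holds because $\Cmat_\xvec$ is positive definite), the resulting $W^\star$ is the sought estimator matrix.

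The final step is algebraic simplification. Eliminating $b$ and factoring, the denominator of $a$ reorganizes as $(n_a\rho_a+\sigma_a^2)(\alpha+n_q\rho_q\beta(n_a))$ via the identity $(n_a\rho_a+\sigma_a^2)\gamma/\rho_q - c^2n_a\rho_a=(n_a\rho_a+\sigma_a^2)\beta(n_a)$, reproducing the form of $a$ in \eqref{III.A.5}. Back-substituting and using $1-an_a\rho_a=\sigma_a^2(\alpha+n_q\gamma)/[(\alpha+n_q\rho_q\beta(n_a))(n_a\rho_a+\sigma_a^2)]$ yields the stated $b$. For the MSE, \eqref{II.B.2} reduces under the ansatz to $MSE=M-\text{trace}(W^\star\Cmat_{\thetavecsmall\xvec}^H)=M(1-an_a\rho_a-bcn_q\rho_q)$, and the same factoring gives \eqref{III.A.6}. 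The main obstacle is exactly this bookkeeping: without grouping cross terms into $\beta(n_a)$ and $\alpha+n_q\rho_q\beta(n_a)$, the coefficients and the MSE appear as unwieldy rational functions rather than the compact forms stated in the theorem.
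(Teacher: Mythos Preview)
Your proof is correct and takes a genuinely different route from the paper. The paper's Appendix~\ref{VIII} computes $\Cmat_\xvec^{-1}$ explicitly: it applies the Woodbury identity to invert $\Cmat_{\xvec_a}$, forms and inverts the Schur complement $\Dmat=\Cmat_{\xvec_q}-\Cmat_{\xvec_q\xvec_a}\Cmat_{\xvec_a}^{-1}\Cmat_{\xvec_a\xvec_q}$ (again via Woodbury), assembles the full block inverse, and only then multiplies by $\Cmat_{\thetavecsmall\xvec}$. You instead sidestep the inversion entirely by positing the ansatz $W^\star=[a\Hmat^H,\,b\Gmat^H]$ and verifying the normal equations $W^\star\Cmat_\xvec=\Cmat_{\thetavecsmall\xvec}$; the Gram identities $\Hmat^H\Hmat=\rho_a n_a\Imat_M$, $\Gmat^H\Gmat=\rho_q n_q\Imat_M$, and $\Gmat^H(J_{n_q}\otimes\Imat_M)=n_q\Gmat^H$ collapse both block-columns to scalar multiples of $\Hmat^H$ and $\Gmat^H$, yielding a $2\times2$ linear system whose solution and determinant $\alpha+n_q\rho_q\beta(n_a)$ deliver \eqref{III.A.5}--\eqref{III.A.6} directly. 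Your approach is more economical and makes the emergence of the key quantity $\alpha+n_q\rho_q\beta(n_a)$ transparent; the paper's direct-inversion route is heavier but yields the full $\Cmat_\xvec^{-1}$ (their Eq.~\eqref{VIII.18}) as a by-product, which could be reused for other computations. Both rely on the same structural simplifications of $\Cmat_{\xvec_q}$, $\Cmat_{\xvec_a\xvec_q}$, and $\Cmat_{\thetavecsmall\xvec}$ under A.1--A.3, so the first half of your argument coincides with the paper's.
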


\begin{proof}
	The proof appears in Appendix \ref{VIII}.
\end{proof}
The MSE in \eqref{III.A.6} does not require matrix inversion and is only a function of the scalar variables: $n_a$, $n_q$, $\rho_a$, $\rho_q$, $\sigma_a^2$, and $sigma_q^2$. Thus, it can be used to solve various optimization problems.
The two extreme cases of Theorem 1 are when $n_a=0$ and when $n_q=0$. For these cases, substituting $n_a=0$ in \eqref{III.A.6}, we obtain
\be \label{III.A.9}
	MSE = M - \frac{2M\rho_q n_q}{\pi (\rho_q+\sigma_q^2) \left(\alpha + (1-\alpha)n_q\right)}.
\ee
Similarly, substituting $n_q=0$ in \eqref{III.A.6}, we obtain
\be \label{III.A.10}
	MSE = M - \frac{M\rho_a n_a}{\rho_a n_a + \sigma_a^2}.
\ee
In both cases, the MSE given in \eqref{III.A.9} and \eqref{III.A.10} is a monotonically decreasing function of $n_q$ and $n_a$, respectively.
The expression in \eqref{III.A.9} coincides with the result in \cite{li2017channel} for purely quantized data.

\subsection{Constraints} \label{III.B}
Physical distributed networks, such as sensor networks, typically suffer from energy constraints and limited communication bandwidth, requiring quantization before data can be transmitted to a fusion center for further processing.
In this context, one of the incentives of integrating low-resolution ADCs is their power consumption, which is much lower than that of high-resolution ADCs. In particular, power consumed in each ADC can be expressed as a factor of the number of quantization bits, $\tilde{b}$, as follows:
\be \label{III.B.1}
	P_{ADC} = \text{FOM}_\text{W} f_s 2^{\tilde{b}},
\ee
where $f_s$ is the sampling rate and $\text{FOM}_\text{W}$ is Walden's figure-of-merit for evaluating the power efficiency with ADCs resolution and speed \cite{walden1999analog}.
In this paper, the power consumption of a single high-resolution ADC with $b$ bits is denoted by $P_H$ and that of a single 1-bit ADC by $P_L$. Thus, the total power consumption of the $N_a$ high-resolution measurements is $N_a P_H$ and for the $N_q$ low-resolution measurements is $N_q P_L$. Due to power limits of physical systems, we consider that the following constraint is imposed on the total power:
\be \label{III.B.2}
	N_a P_H + N_q P_L \leq P_{\text{max}}.
\ee
Substituting the power consumption of each ADC from \eqref{III.B.1} with $\tilde{b}=b$ and $\tilde{b}=1$ for the $b$-bit and 1-bit measurements, respectively, into \eqref{III.B.2}, we obtain the constraint
\be \label{III.B.3}
	2^b N_a + 2 N_q \leq \tilde{P}_{\text{max}},
\ee
where $\tilde{P}_{\text{max}}\define\nicefrac{P_{\text{max}}}{\text{FOM}_\text{W} f_s}$ is the normalized maximum power. By substituting $N_a=Mn_a$ and $N_q=Mn_q$, from assumption A.2 and A.3 of the LGO model, \eqref{III.B.3} can be rewritten as
\be \label{III.B.4}
	2^bMn_a + 2Mn_q \leq \tilde{P}_{\text{max}}.
\ee
It should be noted that while the constraint in \eqref{III.B.2} treats the high-resolution measurements as finite $b$-bit quantized data, the MSE in \eqref{III.A.6} is derived under the assumption of pure analog measurements. Therefore, the number of bits that are used to represent the high-resolution ADC, $b$, should be chosen such that the quantization error is negligible. In simulations, we demonstrate that by choosing $b$ large enough, the approximation of analog measurements holds and the MSE from \eqref{III.A.6} is achieved by $b$-bit quantized data.

In addition to the power constraints in \eqref{III.B.3}, systems may also have other physical constraints on the number of measurements. This can be due to system design or available workspace, such as a field in which sensors are deployed, requiring a minimal distance from each other to avoid interference. The optimization in the following section can be readily extended to incorporate such constraints.

\subsection{Resource Allocation} \label{III.C}
In this subsection, we optimize the resource allocation of the LGO model using the analytical expression of the MSE derived in Subsection \ref{III.A} as an objective function and imposing the constraints from Subsection \ref{III.B}. We show that the integer programming problem from \eqref{II.C.1} can be solved in polynomial time.
It should be noted that the expression of the MSE in \eqref{III.A.6} can be used as a tractable objective function for different optimization problems of the LGO mixed-resolution scheme.

Under assumptions A.1-A.3, the objective function, i.e. the MSE from \eqref{II.B.2}, is now given by \eqref{III.A.6}. In addition, since $M$ is known, the decision variables can be changed to be $n_a$ and $n_q$, using the relation $N_a=M n_a$ and $N_q=M n_q$. The minimization of the MSE in the following is conducted under the power constraint in Subsection \ref{III.B}.
Thus, applying the constraint from \eqref{III.B.4} and substituting \eqref{III.A.6} in \eqref{II.C.1}, the minimum MSE problem under a power constraint is formulated as
\be	\label{III.C.1}
	\begin{aligned}
		\min_{n_a,n_q} \quad &M - M\left( \frac{\rho_a n_a}{\rho_a n_a + \sigma_a^2} \right. \\
		&\left.+ \frac{2\rho_q n_q \sigma_a^4}{\pi (\rho_q+\sigma_q^2) (\alpha + \beta(n_a)\rho_q n_q)(\rho_a n_a + \sigma_a^2)^2}\right)\\
		&\text{s.t.} \quad \begin{cases}
			2^bMn_a + 2M n_q \leq \tilde{P}_{\text{max}}\\
			n_a,n_q \in \mathbb{Z}_+
		\end{cases}.
	\end{aligned}
\ee
Solving the optimization problem in \eqref{III.C.1} no longer requires the inversion of $\Cmat_\xvec$. Moreover, the values $n_a$ and $n_q$ are no longer found in the matrix dimensions.
Thus, \eqref{III.C.1} can be solved using a standard  search approach or by a conventional relaxation approach. In this paper, we adopt the first option.
\begin{proposition} \label{Proposition}
	The optimization problem in \eqref{III.C.1} can be solved using a one-dimensional search over $n_a$, taking a set of discrete values
	\be \label{III.C.2}
		n_a \in \left\{0,1,\cdots,\left\lfloor \frac{\tilde{P}_{\text{max}}}{2^bM} \right\rfloor \right\}
	\ee
	where for each value of $n_a$, the value of $n_q$ is choosen to utilize maximum power, i.e.
	\be \label{III.C.3}
		n_q = \left\lfloor \frac{\tilde{P}_{\text{max}} - 2^bMn_a}{2M} \right\rfloor.
	\ee
\end{proposition}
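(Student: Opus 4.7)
The plan is to establish two facts: (i) for any fixed feasible $n_a$, the MSE objective is strictly decreasing in $n_q$, so the optimal $n_q$ saturates the integer relaxation of the power constraint and is therefore given by \eqref{III.C.3}; and (ii) the range of candidate values of $n_a$ is limited by the power budget alone, yielding the discrete set in \eqref{III.C.2}. Together, these two observations collapse the 2D integer program into a one-dimensional enumeration over $n_a$.

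For step (i), I would fix $n_a$ and treat the MSE in \eqref{III.A.6} as a function of $n_q$ only. The term $\frac{\rho_a n_a}{\rho_a n_a + \sigma_a^2}$ is constant in $n_q$, so it suffices to analyze the sign of the derivative of
\[
g(n_q) \;=\; \frac{n_q}{\alpha + \beta(n_a)\,\rho_q\, n_q}
\]
with respect to $n_q$, since all other factors multiplying $g(n_q)$ in the MSE expression are positive constants. A direct quotient-rule computation gives $g'(n_q) = \alpha / \bigl(\alpha + \beta(n_a)\rho_q n_q\bigr)^2$. From \eqref{III.A.7}, $\alpha = \tfrac{2}{\pi}\arccos\bigl(\tfrac{\rho_q}{\rho_q+\sigma_q^2}\bigr) > 0$ whenever $\sigma_q^2 > 0$, so $g'(n_q) > 0$ on the feasible range (and the denominator of $g$ is strictly positive there since $\alpha>0$ and the objective is well defined). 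Hence the bracketed term in \eqref{III.A.6} is strictly increasing in $n_q$, so the MSE is strictly decreasing in $n_q$ for each fixed $n_a$.

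For step (ii), feasibility in \eqref{III.C.1} requires $2^b M n_a + 2 M n_q \le \tilde{P}_{\text{max}}$ with $n_q \in \mathbb{Z}_+$. Since $n_q \ge 0$, this forces $n_a \le \tilde{P}_{\text{max}}/(2^bM)$, and integrality of $n_a$ gives $n_a \in \{0,1,\dots,\lfloor \tilde{P}_{\text{max}}/(2^bM)\rfloor\}$, yielding \eqref{III.C.2}. Combining with step (i), for each such $n_a$ the monotonicity implies the optimal $n_q$ is the largest non-negative integer satisfying $2Mn_q \le \tilde{P}_{\text{max}} - 2^bMn_a$, i.e., the expression in \eqref{III.C.3}. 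Therefore the global optimum of \eqref{III.C.1} is attained by enumerating $n_a$ over the finite set \eqref{III.C.2}, evaluating \eqref{III.A.6} with $n_q$ given by \eqref{III.C.3}, and selecting the pair with smallest MSE.

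The only real obstacle is verifying that $g'(n_q) > 0$ rigorously, which requires checking the sign of $\alpha$ and that the denominator $\alpha + \beta(n_a)\rho_q n_q$ remains positive over all feasible $n_q$ (the sign of $\beta(n_a)$ itself is not a priori fixed, but this does not matter because the derivative is proportional only to $\alpha$, not $\beta(n_a)$). Beyond that, the argument is an elementary feasibility-and-monotonicity reduction, with no further algebra needed.
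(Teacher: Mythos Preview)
Your approach---fix $n_a$, show monotonicity of the MSE in $n_q$, then saturate the power constraint---is the same strategy the paper uses; the paper works with the finite difference $MSE|_{n_a,n_q}-MSE|_{n_a,n_q+1}$ rather than the derivative of $g$, but both reduce to the same positive quantity proportional to $\alpha$ divided by a squared (or product of) denominators.

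There is, however, a genuine gap in your last paragraph. You correctly identify that the argument hinges on the denominator $\alpha+\beta(n_a)\rho_q n_q$ staying positive for all feasible $n_q$, but you then dismiss the issue by saying the sign of $\beta(n_a)$ ``does not matter because the derivative is proportional only to $\alpha$.'' That is not enough: if $\beta(n_a)<0$, then for $n_q$ large the denominator crosses zero, $g$ acquires a pole, and local positivity of $g'$ no longer implies global monotonicity (indeed $g$ would jump from $+\infty$ to negative values). So you \emph{must} establish either $\beta(n_a)\ge 0$ or, equivalently, that the denominator is positive on the entire feasible range---this is not a side remark but the crux of the argument.

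The paper closes this gap by proving $\beta(n_a)>0$ directly: from \eqref{III.A.8},
\[
\beta(n_a)=\frac{2}{\pi\rho_q}\left(\arcsin\!\Big(\tfrac{\rho_q}{\rho_q+\sigma_q^2}\Big)-\tfrac{\rho_a n_a}{\rho_a n_a+\sigma_a^2}\cdot\tfrac{\rho_q}{\rho_q+\sigma_q^2}\right),
\]
and since $0\le \tfrac{\rho_a n_a}{\rho_a n_a+\sigma_a^2}\le 1$ and $\arcsin(x)\ge x$ for $x\in[0,1]$, the bracket is strictly positive. With $\beta(n_a)>0$ and $\alpha\ge 0$ in hand, your derivative computation (or the paper's finite-difference computation) is then rigorous. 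You should incorporate this step rather than wave it away.
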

\begin{proof}
	The proof appears in Appendix \ref{IX}.
\end{proof}
Based on Proposition \ref{Proposition}, one can numerically evaluate the optimal value, $n_a^*$, by using a simple one-dimensional search algorithm over the discrete values of $n_a$ described in \eqref{III.C.2} to minimize the MSE. Then, the optimal number of quantized measurements, $n_q^*$, is obtained by substituting $n_a=n_a^*$ in \eqref{III.C.3}. 
The optimal resource allocation depends on the system parameters: the total energy budget $\tilde{P}_{max}$ and the noise variances, $\sigma_a^2$ and $\sigma_q^2$.

In the following we present a few special cases to interpret the MSE in \eqref{III.A.6} and the resource allocation optimization in \eqref{III.C.1}.
\begin{itemize}	
	\item For the trivial case where the noise of the analog measurements approaches zero, i.e. $\sigma_a^2\to 0$, it can be seen that the MSE in \eqref{III.A.6} approaches zero as well for any $n_a\geq 1$. Therefore, in this case an optimal solution of \eqref{III.C.1} is obtained for $n_a^*=1$, which enables the estimation of $\thetavec$ without an estimation error such that the MSE is no longer a function of $n_q$. That is, additional measurements, both analog and quantized, won't change the optimal MSE value. 
	\item For the case where the noise of the quantized measurements approaches zero, i.e. $\sigma_q^2 \to 0$, the parameter $\alpha$, defined in \eqref{III.A.7}, also approaches zero.
	By substituting $\sigma_q^2 \to 0$ and $\alpha\to 0$ in \eqref{III.A.6}, we obtain that the MSE in this case is given by
	\be	\label{III.C.4}
		\begin{aligned}
			\lim_{\sigma_q\rightarrow 0} &MSE=M - M\left(\frac{\rho_a n_a}{\rho_a n_a + \sigma_a^2} \right.\\&+ \left.\frac{2 \sigma_a^4}{\pi (\rho_a n_a +\sigma_a^2)^2 - 2\rho_a n_a (\rho_a n_a + \sigma_a^2)}\right) ,
		\end{aligned}
	\ee
	for $n_q\geq 1$ and by \eqref{III.A.10} for $n_q=0$.
	Depending on the available power, $\tilde{P}_{\text{max}}$, the optimal resource allocation scheme in this case is determined. It can be seen that for $n_q\geq 1$, the MSE given in \eqref{III.C.4} is a constant function w.r.t. $n_q$. That is, for the noiseless case, the number of quantized measurements does not change the MSE as long as there is at least a single quantized measurement. Therefore, for this case of $\sigma_q\rightarrow 0$, the optimal policy in the sense of minimum MSE of the LMMSE estimator is one of the two following options: 1) take the maximum possible number of analog measurements and at least a single quantized measurement or 2) use only analog measurements, i.e. $n_{a_{max}}\define\left\lfloor \frac{\tilde{P}_{\text{max}}}{2^bM} \right\rfloor$, as given in \eqref{III.A.10}.
	The choice between these two options is as follows. 
	If the available power maintains the following inequality
	\be \label{III.27}
		\tilde{P}_{\text{max}} - \left\lfloor \frac{\tilde{P}_{\text{max}}}{2^bM} \right\rfloor 2^bM \geq 2M
	\ee
	then Option 1 is the optimal solution of \eqref{III.C.1} for this case. If the inequality in \eqref{III.27} does not hold, the optimal solution is either Option 1 with $n_a^*=n_{a_{max}}-1$ and $n_q^*\geq 1$ or Option 2 with $n_a^*=n_{a_{max}}$  and $n_q^*=0$. 
	Comparing the MSE in \eqref{III.A.10} and \eqref{III.C.4} for the latter case,
	Option 2 is optimal if the following inequality holds
	\be \label{III.C.5}
		\left( (\pi-2) \rho_a^2 - 2\rho_a \sigma_a^2\right) n_{a_{max}}\hspace{-0.1cm} > \hspace{-0.05cm} 2\sigma_a^4 -\pi\rho_a\sigma_a^2 + (\pi-2)\rho_a^2
	\ee
	and if not, then Option 1 is optimal.
	Thus, the sampling policy in this case depends on the noise variance of the analog measurement, $\sigma_a^2$, the factor $\rho_a$, and the maximum number of available analog measurements under the power constraint from \eqref{III.B.2}, i.e. by the condition in \eqref{III.27}.	
\end{itemize}

\subsection{Optimization with Dithering} \label{III.D}
Dither, roughly speaking, is a random noise process added to a signal prior to its quantization \cite{Papadopoulos_Wornell_Oppenheim_2001,gustafsson2013generating,dabeer2008multivariate}.
The addition of dithering noise is commonly used in both Bayesian and non-Bayesian estimation with low-resolution quantized data. Although noise commonly degrades the performance of a system, it has been shown that the addition of noise to quantized measurements can improve system performance.

In this subsection, we consider the addition of independent, zero-mean, complex Gaussian dithering noise vectors, $\wvec_{d_a}\sim\mathcal{CN}(\zerovec,\sigma_{d_a}^2 \Imat_{N_a})$ and $\wvec_{d_q}\sim\mathcal{CN}(\zerovec,\sigma_{d_q}^2 \Imat_{N_q})$, to the analog and quantized measurements, respectively, before quantization. 
The dithering noise vectors, $\wvec_{d_a}$ and $\wvec_{d_q}$, and the vectors $\thetavec$, $\wvec_a$, and $\wvec_q$ are assumed to be mutually independent. Therefore, the analog measurement vector in \eqref{II.A.1} now equals
\be \label{III.D.1}
	\xvec_a = \Hmat\thetavec + \wvec_a + \wvec_{d_a},
\ee
and the quantized measurement vector in \eqref{II.A.2} satisfies
\be	\label{III.D.2}
	\xvec_q = \mathcal{Q}\left(\Gmat\thetavec + \wvec_q + \wvec_{d_q} \right).
\ee
Since the vectors $\thetavec$, $\wvec_{d_a}$, $\wvec_{d_q}$, $\wvec_a$, and $\wvec_q$ are all mutually independent, then it can be shown, similar to the derivation of \eqref{II.A.3} and \eqref{II.A.5}, that \eqref{III.D.1} and \eqref{III.D.2} imply, in this case, the following covariance matrices of the measurements: 
\be \label{III.D.3}
	\Cmat_{\xvec_a} = \Hmat \Sigmavec_\thetavecsmall \Hmat^H + (\sigma_a^2 + \sigma_{d_a}^2) \Imat_{N_a}
\ee
and
\be \label{III.D.4}
	\Cmat_{\yvec} = \Gmat \Sigmavec_\thetavecsmall \Gmat^H + (\sigma_q^2 + \sigma_{d_q}^2) \Imat_{N_q}.
\ee
Thus, all of the results developed in Subsections \ref{III.A} and \ref{III.C} hold with the noise variance of the analog measurements, $\sigma_a^2$, increasing by $\sigma_{d_a}^2$, and that of the quantized measurements, $\sigma_q^2$, increasing by $\sigma_{d_q}^2$. In particular, the MSE for the LGO model from \eqref{III.A.6} in this case is given by
\be \label{III.D.5}
	\begin{aligned}
		&MSE_{\hat{\thetavecsmall}} = M - M\left(\frac{\rho_a n_a }{\rho_a n_a + \sigma_a^2 + \sigma_{d_a}^2} \right.\\
		&\hspace{-0.1cm}\left. +\frac{2\rho_q n_q (\sigma_a^2+\sigma_{d_a}^2)^2}{\pi (\rho_q+\sigma_q^2+\sigma_{d_q}^2) \left(\alpha_d + \beta_d(n_a)\rho_q n_q\right) \left(\rho_a n_a + \sigma_a^2+\sigma_{d_a}^2\right)^2}\right)
	\end{aligned},
\ee
where
\be	\label{III.D.6}
	\alpha_d \define \frac{2}{\pi} \arccos\left(\frac{\rho_q}{\rho_q+\sigma_q^2+\sigma_{d_q}^2} \right)
\ee
and
\be	\label{III.D.7}
	\begin{aligned}
		\beta_d(n_a) \define &\frac{2}{\pi} \arcsin\left(\frac{\rho_q}{\rho_q+\sigma_q^2+\sigma_{d_q}^2} \right)\frac{1}{\rho_q} \\ & - \frac{2 \rho_a n_a}{\pi(\rho_q +\sigma_q^2+\sigma_{d_q}^2)(\rho_a n_a + \sigma_a^2+\sigma_{d_a}^2)},
	\end{aligned}
\ee
are the equivalent of \eqref{III.A.7} and \eqref{III.A.8}, respectively, and replacing $\sigma_a^2$ and $\sigma_q^2$ with $\sigma_a^2+\sigma_{d_a}^2$ and $\sigma_q^2+\sigma_{d_q}^2$, respectively.

Under this model, the goal is to find the optimal resource allocation, the number of analog and quantized measurements, which minimizes the estimator's MSE while also optimizing the variance of the added dithering noise.
Therefore, we look at the LGO Model from Subection \ref{III.A} and solve the optimization problem given in \eqref{III.C.1} with the objective function now being the MSE from \eqref{III.D.5} and adding the dithering noise variances, $\sigma_{d_a}^2$ and $\sigma_{d_q}^2$, as additional decision variables. Mathematically, the optimization problem in \eqref{III.C.1} with the addition of dithering noise can be rewritten as
\be	\label{III.D.8}
	\begin{aligned}
		&\min_{n_a,n_q,\sigma_{d_a}^2,\sigma_{d_q}^2} \quad M - M\left(\frac{\rho_a n_a}{\rho_a n_a + \sigma_a^2+\sigma_{d_a}^2} \right. \\
		&\left.+ \frac{2\rho_q n_q (\sigma_a^2+\sigma_{d_a}^2)^2}{\pi (\rho_q+\sigma_q^2+\sigma_{d_q}^2) (\alpha_d + \beta_d(n_a)\rho_q n_q)(\rho_a n_a + \sigma_a^2+\sigma_{d_a}^2)^2}\right)\\
		&\quad \text{s.t.} \quad \begin{cases}
			2^bMn_a + 2M n_q \leq \tilde{P}_{max}\\
			n_a,n_q \in \mathbb{N}_0\\
			\sigma_{d_a}^2 \geq 0 \\
			\sigma_{d_q}^2 \geq 0
		\end{cases},
	\end{aligned}
\ee
where $\alpha_d$ and $\beta_d$ are given in \eqref{III.D.6} and \eqref{III.D.7}, respectively.

It can be shown that for any given set of values $n_a$, $n_q$, and $\sigma_{d_q}^2$ the optimal dither noise added to the analog measurements, $\sigma_{d_a}^2$, is zero. This solution is intuitive since for analog data, the MSE decreases as the signal-to-noise ratio (SNR) increases, and, thus, the addition of a dithering noise can only degrade estimation performance. Therefore, in the following, we discuss two scenarios of dithering: 1) the optimal solution, which is obtained by adding dithering noise to the quantized measurements only; and 2) adding dithering noise with the same variance to the entire system, both analog and quantized measurements. Thus, we add a single decision variable to the optimization problem in \eqref{III.D.8}, denoted as $\sigma_d^2$, where for Scenario 1 we set $\sigma_{d_a}^2=0$ and $\sigma_{d_q}^2=\sigma_d^2$, and for Scenario 2 we set $\sigma_{d_a}^2=\sigma_{d_q}^2=\sigma_d^2$.

Similar to the derivation of Proposition \ref{Proposition}, it can be proved that for each value of $n_a$ and given the dithering noise variances, $\sigma_{d_a}^2$ and $\sigma_{d_q}^2$, the number of quantized measurements, $n_q$, should be chosen to be the maximum allowed under the power constraint. Works that utilize dithering, such as \cite{vlachos2018dithered,liang2016mixed}, find the optimal dithering variance by using an exahustive search. Similarly, we find the optimal allocation with dithering, i.e. the solution of \eqref{III.D.8}, by utilizing a two-dimensional search over each value pair, $n_a$ and $n_q$, while utilizing maximum power and for each pair search possible values of dithering variance $\sigma_d^2$. This algorithm is summarized in Algorithm \ref{Algo}.
\begin{algorithm}[htb]
	\SetAlgoLined
	\KwIn{$M$, $b$, $\tilde{P}_{max}$, $\sigma_{d_{max}}^2$, $\sigma_{d_{res}}^2$}
	\KwOut{$n_a^*$, $n_q^*$, ${\sigma_d^2}^*$}
	Initialize $MSE_{opt} = M$\\
	\For{$n_a= 0 : \left\lfloor \frac{\tilde{P}_{\text{max}}}{2^bM} \right\rfloor$}{
		$n_q = \left\lfloor \frac{\tilde{P}_{\text{max}} - 2^bMn_a}{2M} \right\rfloor$\\
		\For{$\sigma_d^2= 0 : \sigma_{d_{res}}^2 : \sigma_{d_{max}}^2$}{
			Calculate $MSE_{temp}$ by substituting $n_a$, $n_q$, and $\sigma_d^2$ in \eqref{III.D.5}\\
			\If {$MSE_{temp} < MSE_{opt}$}{
				Update optimal values: $MSE_{opt}$, $n_a^*$, $n_q^*$, ${\sigma_d^2}^*$.
			}
		}
	}
	\caption{Resource Allocation with Dithering}
	\label{Algo}
\end{algorithm}

\section{Special Cases} \label{IV}
In this section we discuss two special cases of the LGO model, described in Subsection \ref{III.C}. In Subsection \ref{IV.A}, we present the problem of estimating a scalar parameter from noisy measurements, a model which is widely used in WSN, for example \cite{Ribeiro_Giannakis_2006,Papadopoulos_Wornell_Oppenheim_2001}.
In Subsection \ref{IV.B}, we discuss the allocation of analog and quantized measurements for channel estimation in massive MIMO communication systems \cite{li2017channel,pirzadeh2018spectral,lu2014overview,jacobsson2015one,pirzadeh2017spectral,jacobsson2017throughput}.

\subsection{Estimation of a Scalar Parameter} \label{IV.A}
In this subsection, we are interested in estimating a scalar unknown parameter, $\theta\in\mathbb{C}$, with a zero-mean complex Gaussian distribution, $\theta\sim \mathcal{CN}(0,1)$, based on mixed-resolution data.
Suppose, for example, a WSN with $N$ sensors, where $N_a$ of them transmit analog measurements and $N_q$ transmit 1-bit quantized measurements, where $N=N_a+N_q$, to a central unit for estimation. In the case, \eqref{II.A.1} and \eqref{II.A.2} are reduced to
\be \label{IV.A.1}
	\xvec_a = \onevec_{N_a} \theta + \wvec_a
\ee
and
\be \label{IV.A.2}
	\xvec_q = \mathcal{Q}\left(\onevec_{N_q} \theta + \wvec_q\right),
\ee
respectively. We assume that $\wvec_a$ and $\wvec_q$ are independent, zero-mean complex Gaussian noise distributed $\wvec_a\sim\mathcal{CN}(\zerovec,\sigma^2\Imat_{N_a})$ and $\wvec_q\sim\mathcal{CN}(\zerovec,\sigma^2\Imat_{N_q})$.
This scalar estimation problem satisfies the LGO model assumptions: First, it can be seen that the distribution of the unknown parameter, $\theta\sim \mathcal{CN}(0,1)$, satisfies A.1). Second, $\Hmat=\onevec_{N_a}$ and $\Gmat=\onevec_{N_q}$ can be treated as block matrices with each entry being the scalar $1$, which in turn satisfies assumption A.2) and A.3) with $\rho_a =1$ and $\rho_q =1$. Satisfying the assumptions allows us to use Theorem \ref{Theorem} in order to find the optimal measurement allocation scheme for the scalar case.
It should be noted that since the dimensions of the auto-covariance matrix $\Cmat_\xvec$ are affected by the number of measurements, $N_a$ and $N_q$, and not by the size of the unknown parameter vector $\thetavec$. Solving the optimization problem in \eqref{II.C.1} still requires the inversion of $\Cmat_\xvec$ at each value pair for the scalar case. Therefore, the proposed tractable formulation in \eqref{III.C.1} is also relevant and important for the scalar case.
 
\subsection{Channel Estimation in Massive MIMO} \label{IV.B}
In this subsection, we consider the special case of channel estimation using analog and 1-bit quantized measurement in massive MIMO networks. Massive MIMO has a high potential of enabling technology beyond fourth generation (5G) cellular systems due to its advantages in terms of spectral efficiency, energy efficiency, and the ability to use low-cost low-power hardware \cite{larsson2014massive,puglielli2015design}.
The following model of mixed-ADC massive MIMO has been used in \cite{li2017channel,pirzadeh2018spectral,lu2014overview,jacobsson2015one,pirzadeh2017spectral,jacobsson2017throughput} and is described in detail due to its importance and to clarify the relation to the considered LGO model. 

We study the uplink of a single-cell multi-user MIMO system consisting of $K$ single antenna users transmitting independent data symbols simultaneously to a base station (BS) equipped with $L$ antennas.
We consider a block-fading model with coherence bandwidth $W_c$ and coherence time $T_c$. In this model, each channel remains constant in a coherence interval of length $T=T_cW_c$ symbols and changes independently between intervals. The coherence interval can be divided into two parts: the first part  is used for channel estimation, referred to as the training phase, while the second part is for data transmission. During training, all $K$ users simultaneously transmit their pilot sequences of $K$ mutually orthogonal pilot symbols. Therefore, the received signal during the training phase is 
\be	\label{IV.B.1}
	\Rmat = \sqrt{\rho}\Amat \Phimat^T + \Wmat ,
\ee
where $\Amat\in\mathbb{C}^{L\times K}$ is the channel matrix, $\rho$ is the pilot transmission power, 
$\Phimat\in\mathbb{C}^{K\times K}$ is the pilot signal matrix transmitted from the $K$ users, and $\Wmat$ is independent, zero-mean, complex Gaussian noise with each element distributed $\mathcal{CN}(\zerovec,\sigma^2)$. We assume the channel vectors are i.i.d. and denote the $l\text{th}$ row of $\Amat$ as $\avec_l$, where $\avec_l$ has a zero-mean complex Gaussian distribution, i.e. $\avec_l\sim\mathcal{CN}(\zerovec,\Imat_K)$. 
The pilot sequences are drawn from the pilot signal matrix $\Phimat$, where $\Phimat^H\Phimat=\Imat_K$.

Since we assume independent channels and noise, then the rows of $\Rmat$ from \eqref{IV.B.1} are mutually independent, enabling us to analyze each row separately. Therefore, the $l$th row of $\Rmat$ (viewed as a column vector), satisfies
\be \label{IV.B.2}
	\rvec_l = \sqrt{\rho}\Phimat\avec_l + \wvec_l ,~l=1,\ldots,L.
\ee
Due to the high power consumption of high-resolution ADCs and the less informative data of low-resolution ADCs, in many works both analog and quantized measurements are used to benefit from both worlds (see, e.g. \cite{zhang2016mixed,liang2016mixed} and references therein).
In order to achieve good performance there is a need for more measurements when working with quantized data as opposed to analog. Therefore, the pilot sequence, $\Phimat$, can be transmitted a number of times with the antennas switching between the high- and low-resolution ADCs at each transmit.

Let us transmit the $K$ pilot symbols $N$ times. For each channel $l\in\{1,\dots,L\}$, we denote the number of times the pilot sequence is transmitted with the analog ADC connected to the antenna as $n_{a_l}$ and with the 1-bit quantized ADC as $n_{q_l}$ where $n_{a_l}+n_{q_l}=N$. We can organize the measurements now such that the analog measurements from \eqref{IV.B.2} are
\be \label{IV.B.3}
	\rvec_{a_l} = \sqrt{\rho} \Phimat_a \avec_l + \wvec_a,
\ee 
where $\rvec_{a_l}\in\mathbb{C}^{n_{a_l}K}$ and $\Phimat_a$ is a $n_{a_l}K\times K$ block matrix defined as
\be	\label{IV.B.4}
	\Phimat_a = \onevec_{n_a} \otimes \Phimat.
\ee 
Similarly, the quantized measurements from \eqref{IV.B.2} are
\be	\label{IV.B.5}
	\rvec_{q_l} = \sqrt{\rho} \Phimat_q \avec_l + \wvec_q,
\ee
where $\rvec_{q_l}\in\mathbb{C}^{n_{a_q}K}$ and $\Phimat_q$ is a $n_{q_l}K\times K $ block matrix defined as
\be	\label{IV.B.6}
	\Phimat_q = \onevec_{n_q} \otimes \Phimat.
\ee
By setting $\avec_l=\thetavec$, $\rvec_{a_l}=\xvec_a$, and $\rvec_{q_l}=\xvec_q$ the measurement model in \eqref{IV.B.3} and \eqref{IV.B.5} coincides with the general measurement model in \eqref{II.A.1} and \eqref{II.A.2} where $M=K$. Moreover, we now show that Assumptions A.1-A.3 from Subsection \ref{III.A} are satisfied for the mixed-ADC massive MIMO model. First, the channel is modeled such that $\avec_l\sim\mathcal{CN}(\zerovec,\Imat_M)$ keeping the assumption that $\Sigmavec_\thetavecsmall=\Imat_M$. Thus, Assumption A.1) is satisfied. Second, by using \eqref{IV.B.3} and \eqref{IV.B.4}, it can be seen that the matrix $\Hmat=\sqrt{\rho}\Phimat_a$ is a block matrix of size $n_{a_l}K\times K$ and satisfies $(\sqrt{\rho}\Phimat)^H\sqrt{\rho}\Phimat=\rho\Imat_K$, thus Assumption A.2) is satisfied with $\rho_a = \rho$. Similarly, by using \eqref{IV.B.5} and \eqref{IV.B.6}, it can be seen that the matrix $\Gmat=\sqrt{\rho}\Phimat_q$ satisfies Assmuption A.3) with $\rho_q = \rho$.

The goal under this model is to estimate the channel of a system consisting of a single antenna, $L=1$, while using both a high- and low-resolution ADCs in the BS, which the antenna can switch between to acquire both analog and quantized measurements. The number of measurements taken, or equivalently the number of times the pilot signal is transmitted using each ADC resolution, is to be optimized to minimize the MSE of the LMMSE estimator, while not exceeding the power consumption at the BS.
Substituting $M=K$, $\rho_a=\rho_q=\rho$, and $\sigma_a^2=\sigma_q^2=\sigma^2$ in \eqref{III.C.1}, the resource allocation problem can be solved for the problem of channel estimation in massive MIMO systems. Similarly, the same substitution can be done in \eqref{III.D.8} allowing to find the optimal resource allocation with dithering.

This approach can be extended to the more general case, where there are $L\geq 1$ i.i.d. antennas or channels to estimate. The channel estimation problem for the channel \eqref{IV.B.1} can be decomposed into parallel estimation problems \cite{zeitler2012bayesian} solved separately for each
channel with the maximum power available equaling for example, to $P_{max}/L$, allocating each channel an equal power supply, thus allowing us to optimize the whole system while solving the problem for a single channel.

\section{Simulations} \label{V}
In this section we numerically evaluate the performance of the mixed-resolution system presented in Section \ref{III} and that of the proposed resource allocation optimization approach. In Subsection \ref{V.A} we simulate the scalar case from Subsection \ref{IV.A}, in Subsection \ref{V.A} we simulate the model of channel estimation in massive MIMO from Subsection \ref{IV.B}, and in Subsection \ref{V.C} we compare the run time of the proposed resource allocation approach and the brute-force approach in \eqref{II.C.1}.
As discussed in Subsection \ref{III.B}, in order for the quantization noise to be negligible we use $b=6$ bits on a quantization range $[-5,5]$ to represent our analog measurements. The noise added to the analog and quantized measurements is assumed to have the same variance, $\sigma_a^2 = \sigma_q^2 = \sigma^2$, and we set $\rho_a = \rho_q = 1$. Our results are averaged over 100 Monte-Carlo simulations.

\subsection{Scalar Parameter Estimation} \label{V.A}
In this subsection, estimation of a scalar parameter, $M=1$, as discussed in Subsection \ref{IV.A}, is evaluated.
The simulation results in Fig. \ref{V.Fig1} show the behavior of the MSE from \eqref{III.A.6} for the scalar case with different values of $n_a$ and $n_q$ as a function of the noise variance $\sigma^2$. It can be seen that the addition of measurements, be it analog or quantized, does not degrade the performance in terms of MSE. In addition, when only analog measurement are used, $n_q= 0$, the MSE monotonically decreases as $\sigma^2$ decreases. The same can not be said for the mixed-resolution case for which the behavior of the MSE is not even convex. For example, it can be seen that there are cases such as $n_a=1$ and $n_q=100$ that the addition of dithering noise to  both measurement types can improve the MSE which may be counterintuitive due to the behavior of pure analog measurement estimation.
\begin{figure}[htb]
	\centering
	\includegraphics[width=\linewidth]{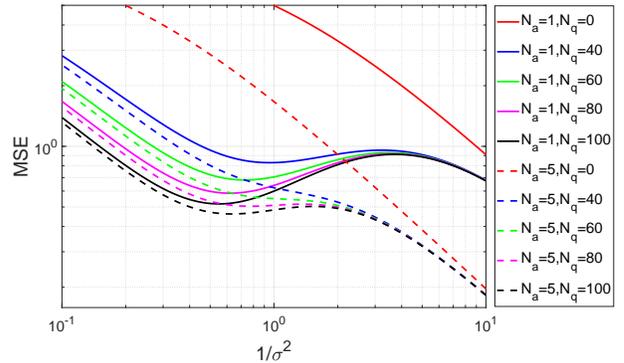}
	\caption{Scalar case: effects of noise variance on the estimator's MSE for different number of measurements.}
	\label{V.Fig1}
\end{figure}
In Fig. \ref{V.Fig2:Sig1} and Fig. \ref{V.Fig2:Sig2}, the MSE is shown for a given noise variance, $\sigma^2$, as a function of the number of analog measurements, $n_a$, and quantized measurements, $n_q$. Dots on the graphs show the possible value pairs of analog and quantized measurements, $n_a$ and $n_q$, for different values of available power, $\tilde{P}_{max}$. These figures show, as before, that taking more measurements does not increase the MSE. In addition, it can be seen that using a mixed-resolution approach can have a lower MSE than assuming a naive approach which utilizes only one type of measurement up to the maximum power available.
\begin{figure}[htb]
	\centering
	\subcaptionbox{\label{V.Fig2:Sig1}}[\linewidth]
	{ \includegraphics[width=.9\linewidth]{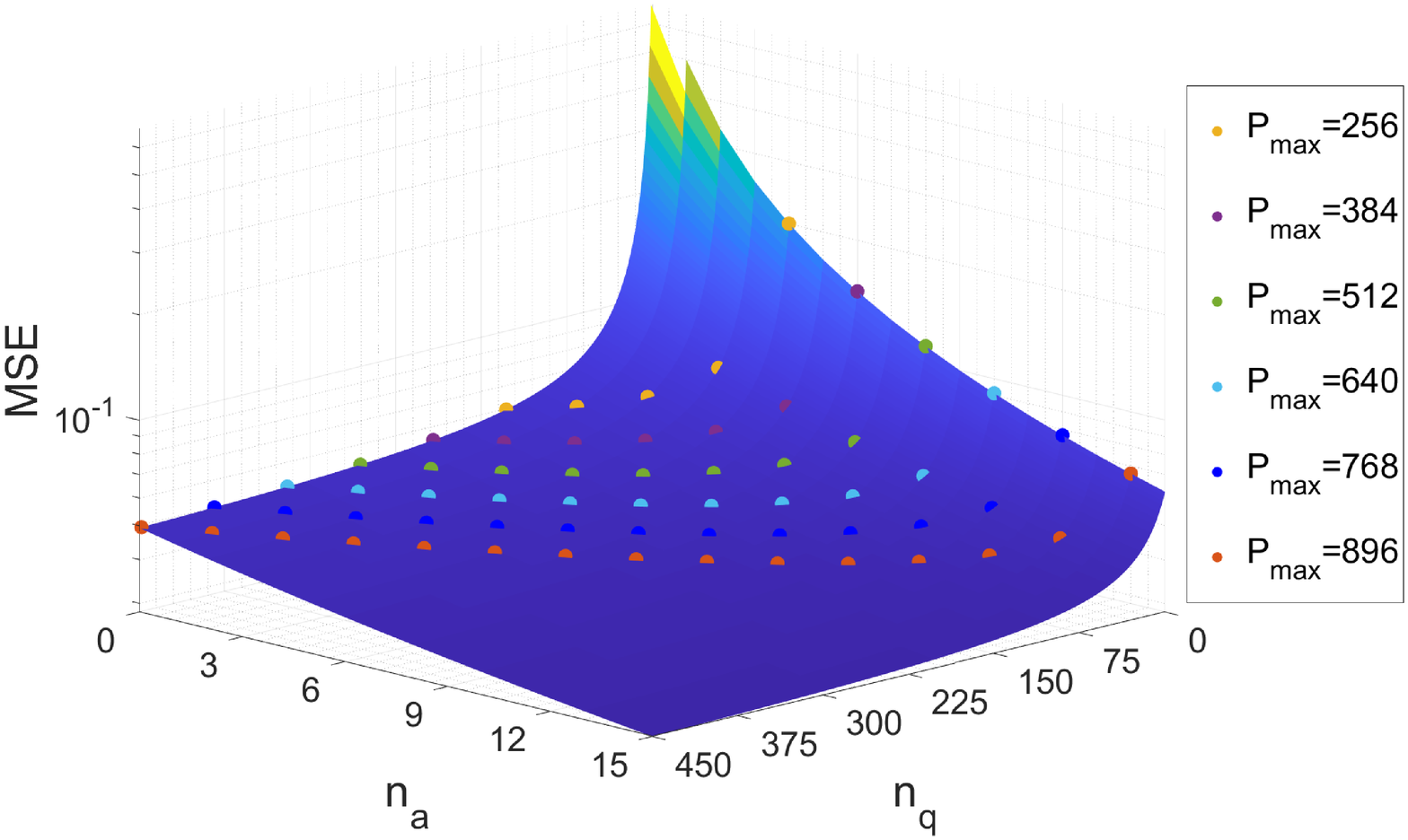}}
	\subcaptionbox{\label{V.Fig2:Sig2}}[\linewidth]
	{\includegraphics[width=.9\linewidth]{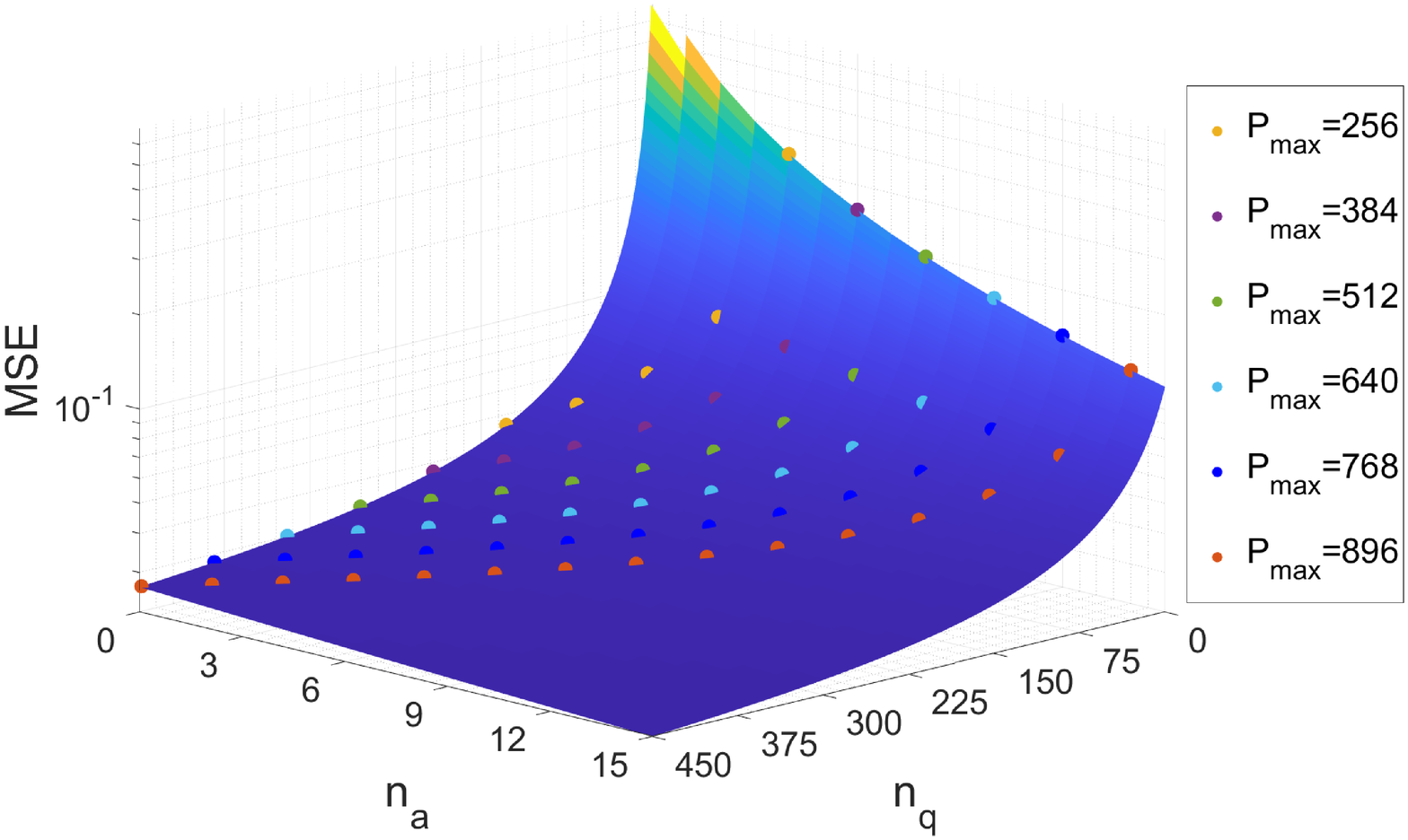}} 
	\caption{Scalar Case: The MSE as a function of the number of analog and quantized measurements, $n_a$ and $n_q$. The noise variance is $\sigma^2=1$ (a) and $\sigma^2=2$ (b), where $\sigma^2=\sigma_a^2=\sigma_q^2$.}
\end{figure}

\subsection{Channel Estimation in Massive MIMO} \label{V.B}
In this subsection, the massive MIMO model, as described in Subsection \ref{IV.B}, is simulated. The resource allocation optimization problem in \eqref{III.C.1} is solved for $M=10$ users transmitting a randomly generated pilot matrix $\Phimat$. The maximum power available is set to $\tilde{P}_{max} = 2^bMn_{a_{max}}$, where $b=6$ and $n_{a_{max}}=20$. In Fig. \ref{V.Fig3} we compare between the optimal resource allocation calculated using a one-dimensional search, as described in Subsection \ref{III.C}, and between two naive solutions: 1) a greedy scheme, which utilizes the maximum number of analog measurements and 2) a cost-effective scheme, which uses the maximum number of measurements by only having quantized measurements. The analytic MSE is calculated under the assumption of pure analog measurements, i.e. as given in \eqref{III.C.1}, although we use in practice $b=6$-level quantized data. Therefore, we also present Monte-Carlo simulations of the obtained MSE in practice that show that the values to represent the analog measurements make the quantization noise negligible for the purpose of estimation.

We can divide the graph into three sections: 1) low noise variance, $\sigma^2 <0.2$, in which case the use of all analog measurements is optimal, 2) high noise variance, $\sigma^2 > 2$,
 in which the use of all quantized measurements is optimal, and 3) middle section, $0.2<\sigma^2 <2$, in which it is optimal to use a mixed-resolution measurement scheme.
Moreover, Fig. \ref{V.Fig3} also compares the aforementioned solutions to the solution of the resource allocation optimization problem with optimization of the dithering noise added only to the quantized measurements. This is done using a two-dimensional search over $n_a$ and $\sigma_{q_d}^2$ with $\sigma_{q_d}^2\in[0,2]$ with increments of $0.1$, as in Algorithm \ref{Algo}. In the middle section in which better performance was achieved by using the mixed-resolution scheme as opposed to the naive solutions, the addition of dithering noise improved performance even more. As expected, in the low noise variance section, in which the analog measurements are optimal, the dithering had no effect but it did effect part of the high noise variance section improving the performance when using all quantized measurement. Therefore, we can conclude that utilizing dithering can improve system performance in the sense of the LMMSE estimator's MSE.
\begin{figure}[htb]
	\centering
	\hspace*{-0.3cm}
	\includegraphics[width=1.15\linewidth]{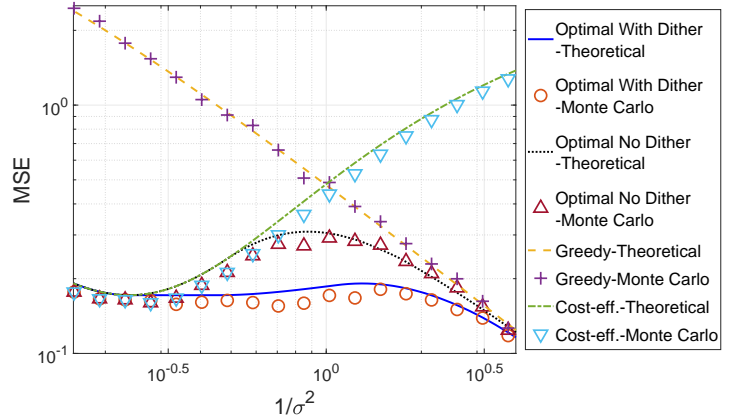}
	\caption{Channel estimation in massive MIMO with $M=10$ users. The MSE of the LMMSE estimator versus the noise variance, where $\sigma^2=\sigma_a^2=\sigma_q^2$. The addition of dithering noise is available only to the 1-bit quantized meausrements in the mixed-resolution scheme.}
	\label{V.Fig3}
\end{figure}

\subsection{Run Time} \label{V.C}
In this subsection, the computational complexity of solving the resource allocation optimization problem is evaluated under the LGO measurement model.
This is done by comparing the time of solving the optimization problem in \eqref{II.C.1} using the closed-form analytic expression of the MSE derived in Theorem \ref{Theorem} compared to the general MSE term in \eqref{II.B.2}, which requires matrix inversion. In both cases, the one-dimensional search from Proposition \ref{Proposition} is used.
It should be noted that the general MSE term in \eqref{II.B.2} does not give insight on the behavior of the MSE. Therefore, Proposition \ref{Proposition} isn't proven for the general term which in turn requires a two-dimensional search over all possible value pairs of $n_a$ and $n_q$ in order to solve the resource allocation optimization problem. Thus, for the general case, the run time of the optimization problem using the general MSE term is much longer than that presented in the simulation. 
The average computation time, “run time”, was evaluated by running the algorithm using Matlab on an Intel Xeon E5-2660 CPU.
In Fig. \ref{V.Fig4} we show the run time of the optimal resource allocation in both forms of the MSE as a function of the maximum number of analog measurements available. The maximum power available is set to equal $\tilde{P}_{max} = 2^b M n_{a_{max}}$ and we evaluate the cases of $M=1,3,10$.
It can be seen that the more analog measurements are available, the larger the maximum power and therefore, the computation time increases since the search is over more values. This has a larger effect when using matrix inversion since $\Cmat_\xvec\in \mathbb{C}^{(n_a+n_q)M \times (n_a+n_q)M}$. In addition, the size of the unknown parameter vector, $\theta \in \mathbb{C}^M$, does not affect the run time of the proposed approach in Theorem \ref{Theorem} while increasing the calculation time of the matrix inversion MSE in the direct approach.
\begin{figure}[htb]
	\centering
	\includegraphics[width=1\linewidth]{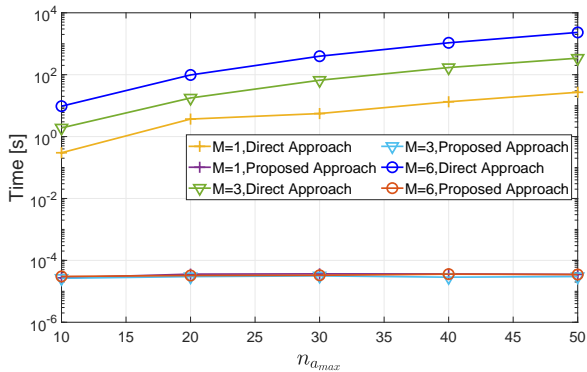}
	\caption{Run time comparison between calculation of the MSE using the direct approach and the closed-form expression in Theorem \ref{Theorem} for different values of $M$.}
	\label{V.Fig4}
\end{figure}

\section{Conclusion} \label{VI}
In this paper, we consider Bayesian parameter estimation using mixed-resolution measurements. First, we derive the LMMSE estimator and its associated MSE for the mixed-resolution case.
It is shown that the MSE requires matrix inversion with the size of the matrix depending on the number of analog and quantized measurements, and thus, optimization problems that aim to minimize the MSE w.r.t. the number of  measurements are impractical due to the exhaustive search required.
Next, we present the LGO model for which we calculate a closed-form expression for the LMMSE estimator and its corresponding MSE. Two special cases for which the mixed-resolution scheme under the LGO model is relevant are presented: 1) scalar parameter estimation which is used, for example, in WSN, and 2) channel estimation in massive MIMO communication systems.
Based on the closed-form expression of the MSE and enforcing a power consumption constraint, a resource allocation optimization problem is formulated with the goal of finding the optimal resources, namely the number of analog and quantized measurements. A one-dimensional search is proven to be sufficient in finding the optimal solution to the problem.
Furthermore, the concept of dithering is presented and the resource allocation optimization problem is derived while also allowing optimization of the dithering noise variance added to the system. 
Finally, in the simulations we show that the mixed-resolution scheme outperforms the naive approaches of pure analog or pure quantized measurements for certain ranges of noise variance that are not in the asymptotic region nor the so-called non-informative region. In addition, the possible benefits of dithering on estimation performance in terms of MSE are shown for mixed-resolution schemes.
Solving the resource allocation optimization problem for the LGO measurement model has a low-complexity solution allowing fast calculation. The mixed-resolution scheme can and should be adopted in different real-world applications with the solution for the LGO measurement model easily achieved and thus improve system performance.

\appendices

\section{Derivation of \eqref{II.B.3} and \eqref{II.B.4}} \label{VII}
In this appendix we develop the auto-covariance and cross-covariance matrices from \eqref{II.B.3} and \eqref{II.B.4} for the general model described in Subsection \ref{II.A}. By using \eqref{II.A.6}, it can be verified that the auto-covariance and cross-covariance matrices are block matrices given by
\be	\label{VII.1}
	\Cmat_\xvec = \begin{bmatrix}
		\Cmat_{\xvec_a} & \Cmat_{\xvec_a \xvec_q} \\ \Cmat_{\xvec_q \xvec_a} & \Cmat_{\xvec_q}
	\end{bmatrix}
\ee
and
\be \label{VII.2}
\Cmat_{\thetavecsmall \xvec} = \begin{bmatrix}
	\Cmat_{\thetavecsmall \xvec_a} & \Cmat_{\thetavecsmall \xvec_q}
\end{bmatrix},	
\ee
respectively, where $\Cmat_{\xvec_a}$ is given in \eqref{II.A.3}. From the analog measurement vector $\xvec_a$ given in \eqref{II.A.1} and based on the measurement model in Subsection \ref{II.A}, it can be verified that
\be \label{VII.3}
	\Cmat_{\thetavecsmall \xvec_a} = \Sigmavec_\thetavecsmall \Hmat^H.
\ee
To calculate the covariance matrix $\Cmat_{\xvec_q}$ we use the arcsine law (p. 396 in  \cite{PapoulisAthanasios2002Prva}) which implies that given two zero-mean jointly complex Gaussian random variables, $r$ and $t$, the cross-covariance of the quantized variables, $\mathcal{Q}(r)$ and $\mathcal{Q}(t)$, is given by
\be \label{VII.4}
	\begin{aligned}
		&\Cmat_{\mathcal{Q}(r),\mathcal{Q}(t)} = {\rm{E}}\left[ \mathcal{Q}(r) \mathcal{Q}^*(t) \right] \\
		&\quad = \frac{2}{\pi} \left[\text{arcsin}\left(\frac{\text{Re}\left\{\Cmat_{r t}\right\}}{\sqrt{\sigma_t^2 \sigma_r^2}} \right) + j\text{arcsin}\left(\frac{\text{Im}\left\{\Cmat_{r t}\right\}}{\sqrt{\sigma_t^2 \sigma_r^2}} \right) \right],
	\end{aligned}
\ee
where $\sigma_t^2$ and $\sigma_r^2$ are the covariance of the random variables $t$ and $r$, respectively. Therefore, given the measurement vector $\xvec_q$ in \eqref{II.A.2}, which is the 1-bit quantization of $\yvec$ in \eqref{II.A.4}, and applying the result in \eqref{VII.4} element-wise, the auto-covariance matrix is given by
\be \label{VII.5}
	\begin{aligned}
		\Cmat_{\xvec_q} = & {\rm{E}}\left[ \xvec_q \xvec_q^H \right] = {\rm{E}}\left[ \mathcal{Q}(\yvec) \mathcal{Q}^H(\yvec) \right] \\
		=& \frac{2}{\pi} \left[\text{arcsin}\left( \left(\text{diag}\left(\Cmat_\yvec\right) \right)^{-\frac{1}{2}} \text{Re}(\Cmat_\yvec) \left( \text{diag} \left( \Cmat_\yvec \right) \right)^{-\frac{1}{2}} \right) \right. \\
		&+\left. j\text{arcsin} \left(\left(\text{diag} \left(\Cmat_\yvec\right) \right)^{-\frac{1}{2}} \text{Im}(\Cmat_\yvec) \left( \text{diag} \left( \Cmat_\yvec \right) \right)^{-\frac{1}{2}} \right) \right],
	\end{aligned}
\ee
where $\Cmat_\yvec$ is defined in \eqref{II.A.5}.
It should be noted that the matrix $\Cmat_{\xvec_q}$ in \eqref{VII.5} is well defined according to the following explanation. The elements of the matrix
\be \label{VII.6}
	\left[ \left(\text{diag} \left(\Cmat_\yvec\right) \right)^{-\frac{1}{2}} \Cmat_\yvec \left( \text{diag} \left( \Cmat_\yvec \right) \right)^{-\frac{1}{2}} \right]_{i,j} = \frac{\Cmat_{\yvec_i \yvec_j}}{\sqrt{\Cmat_{\yvec_i} \Cmat_{\yvec_j}}},
\ee
are, by definition, the Pearson correlation coefficients. Therefore, from the properties of the Pearson correlation coefficeint,
\be \label{VII.7}
	\left|\frac{\Cmat_{\yvec_i \yvec_j}}{\sqrt{\Cmat_{\yvec_i} \Cmat_{\yvec_j}}}\right| \leq 1,
\ee 
and since $\left|\text{Re}\left(\Cmat_{\yvec_i \yvec_j}\right)\right| \leq \left|\Cmat_{\yvec_i \yvec_j}\right|$ we obtain
\be \label{VII.8}
	\left|\frac{\text{Re}\left(\Cmat_{\yvec_i \yvec_j} \right)}{\sqrt{\Cmat_{\yvec_i} \Cmat_{\yvec_j}}}\right| \leq 1.
\ee
Similarly, the same holds for $\text{Im}\left(\Cmat_{\yvec_i \yvec_j}\right)$ and thus, the auto-covariance matrix $\Cmat_{\xvec_q}$ in \eqref{VII.5} is well defined.

Finally, to calculate the cross-covariance matrix of $\xvec_q$ with $\thetavec$ and $\xvec_a$ we use the Bussgang Theorem \cite{Bussgang}, which implies that for two zero-mean complex Gaussian random variables, $r$ and $t$, with 1-bit quantization as given in \eqref{I.1}, the cross-covariance is given by
\be \label{VII.9}
	\Cmat_{r \mathcal{Q}(t)} = {\rm{E}}\left[r \mathcal{Q}^*(t) \right] \sqrt{\frac{2}{\pi \sigma_t^2}} \Cmat_{r t},
\ee
where $\sigma_t^2$ is the covariance of the random variable $t$. Therefore, the cross-covariance matrix of $\thetavec$ and $\xvec_q$ 
is given by
\be \label{VII.10}
\begin{aligned}
	\Cmat_{\thetavecsmall \xvec_q} &= {\rm{E}}\left[ \thetavec \mathcal{Q}^H(\yvec) \right] \\
	&=\sqrt{\frac{2}{\pi}} {\rm{E}}\left[ \thetavec \left(\thetavec^H \Gmat^H + \wvec_q^H \right) \right] \left(\text{diag}\left(\Cmat_\yvec\right)\right)^{-\frac{1}{2}}\\
	&=\sqrt{\frac{2}{\pi}} \Sigmavec_\thetavecsmall \Gmat^H \left(\text{diag}\left(\Cmat_\yvec\right)\right)^{-\frac{1}{2}},
\end{aligned}
\ee
where the second equality is obtained by implementing the Bussgang formula from \eqref{VII.9} element-wise, and the last equality following the fact that $\thetavec$ and $\wvec_a$ are mutually independent and $\Gmat$ is deterministic. Similarly, from \eqref{VII.9} and due to the fact that $\wvec_a$, $\wvec_q$, and $\thetavec$ are mutually independent, the cross-covariance of $\xvec_a$ and $\xvec_q$ is given by
\be \label{VII.11}
	\begin{aligned}
		\Cmat_{\xvec_a \xvec_q} &= {\rm{E}}\left[ \xvec_a \: \mathcal{Q}(\yvec)^H \right] \\
		&= \sqrt{\frac{2}{\pi}} {\rm{E}}\left[ \left(\Hmat \thetavec + \wvec_a^H \right) \left(\thetavec^H \Gmat^H + \wvec_q^H \right) \right] \left(\text{diag}\left(\Cmat_\yvec\right)\right)^{-\frac{1}{2}} \\
		&= \sqrt{\frac{2}{\pi}} \Hmat \Sigmavec_\thetavecsmall \Gmat^H \left(\text{diag}\left(\Cmat_\yvec\right)\right)^{-\frac{1}{2}}.
	\end{aligned}
\ee
By substituting \eqref{VII.3}, \eqref{VII.5}, \eqref{VII.10}, and \eqref{VII.11} in \eqref{VII.1} and \eqref{VII.2} we obtain that the auto-covariance and cross-covariance matrices are given by \eqref{II.B.3} and \eqref{II.B.4}, respectively.

\section{Proof of Theorem \ref{Theorem}} \label{VIII}
In this appendix we prove that the LMMSE estimator from \eqref{II.B.1} and its MSE from \eqref{II.B.2} are reduced, under Assumptions A.1-A.3, to \eqref{III.A.5} and \eqref{III.A.6}, respectively.
By substituting $\Sigmavec_\thetavecsmall = \Imat_M$ from Assumption A.1 in \eqref{II.A.3}, \eqref{II.A.5}, and \eqref{VII.3} we obtain that in this case the auto-covariance and cross-covariance matrices satisfy
\be	\label{VIII.1}
	\Cmat_{\xvec_a} = \Hmat \Hmat^H + \sigma_a^2 \Imat_{N_a},
\ee
\be	\label{VIII.2}
	\Cmat_\yvec = \Gmat \Gmat^H + \sigma_q^2 \Imat_{N_q},
\ee
and
\be \label{VIII.3}
	\Cmat_{\thetavecsmall \xvec_a} = \Hmat^H.	
\ee
By substituting \eqref{III.A.3} and \eqref{III.A.4} from Assumption A.3 in \eqref{VIII.2}, we obtain
\be \label{VIII.4}
	\Cmat_\yvec=\rho_q (\onevec_{n_q} \onevec_{n_q}^T) \otimes \Imat_M + \sigma_q^2\Imat_{N_q},
\ee
which is a real matrix. Thus, by applying the diagonal operator on $\Cmat_\yvec$ in \eqref{VIII.4}, one obtains
\be \label{VIII.6}
	\text{diag}(\Cmat_\yvec) = (\rho_q + \sigma_q^2) \Imat_{N_q},
\ee
and therefore,
\be \label{VIII.7}
	\left(\text{diag}(\Cmat_\yvec)\right)^{-\frac{1}{2}} = \frac{1}{\sqrt{\rho_q + \sigma_q^2}} \Imat_{N_q}.
\ee
From \eqref{VIII.4} and \eqref{VIII.7}, we obtain that
\be \label{VIII.8}
	\begin{aligned}
		&\left(\text{diag}\left(\Cmat_\yvec\right) \right)^{-\frac{1}{2}} \Cmat_\yvec \left( \text{diag} \left( \Cmat_\yvec \right) \right)^{-\frac{1}{2}} = \\ &\hspace{2cm} \Imat_{N_q} + \frac{\rho_q}{\rho_q+\sigma_q^2} ((\onevec_{n_q} \onevec_{n_q}^T) \otimes \Imat_M - \Imat_{N_q}).
	\end{aligned}
\ee
Substituting \eqref{VIII.8} in \eqref{VII.5} we obtain
\be \label{VIII.9}
	\Cmat_{\xvec_q} = \frac{2}{\pi} \text{arcsin}\left( \left(\text{diag}\left(\Cmat_\yvec\right) \right)^{-\frac{1}{2}} \text{Re}(\Cmat_\yvec) \left( \text{diag} \left( \Cmat_\yvec \right) \right)^{-\frac{1}{2}} \right).
\ee
Applying the element-wise arcsin function on \eqref{VIII.8}, results in
\be	\label{VIII.10}
	\begin{aligned}
		\Cmat_{\xvec_q} &= \frac{2}{\pi} \left( \frac{\pi}{2} \Imat_{N_q} + \text{arcsin}\left( \frac{\rho_q}{\rho_q + \sigma_q^2}\right) \left( \frac{1}{\rho_q} \Gmat \Gmat^H - \Imat_{N_q} \right) \right) \\
		&= \alpha \Imat_{N_q} + (1-\alpha)\frac{1}{\rho_q} \Gmat \Gmat^H ,
	\end{aligned}
\ee
where $\alpha$ is defined in \eqref{III.A.7}.

Substituting $\Sigmavec_\thetavecsmall = \Imat_M$, from Assumption A.1, and \eqref{VIII.7} into 
\eqref{VII.11} and \eqref{VII.10} we obtain:
\be \label{VIII.11}
	\Cmat_{\xvec_a \xvec_q} = \sqrt{\frac{2}{\pi(\rho_q+\sigma_q^2)}} \Hmat \Gmat^H
\ee
and
\be \label{VIII.12}
	\Cmat_{\thetavecsmall \xvec_q} = \sqrt{\frac{2}{\pi(\rho_q+\sigma_q^2)}} \Gmat^H,
\ee
respectively. Substitution of \eqref{VIII.1}, \eqref{VIII.10}, \eqref{VIII.11}, and \eqref{VIII.12} in \eqref{II.B.3} and \eqref{II.B.4} results in
\be \label{VIII.13}
	\begin{aligned}
		\Cmat_\xvec &= \begin{bmatrix}
			\Cmat_{\xvec_a} & \Cmat_{\xvec_a \xvec_q} \\ \Cmat_{\xvec_q \xvec_a} & \Cmat_{\xvec_q}
		\end{bmatrix} \\ 
		&=\begin{bmatrix}
			 \Hmat \Hmat^H + \sigma_a^2 \Imat_{N_a} & \sqrt{\frac{2}{\pi(\rho_q+\sigma_q^2)}} \Hmat \Gmat^H\\
		 	\sqrt{\frac{2}{\pi(\rho_q+\sigma_q^2)}} \Gmat \Hmat^H & \alpha \Imat_{N_q} + (1-\alpha)\frac{1}{\rho_q} \Gmat \Gmat^H
		\end{bmatrix}
	\end{aligned}
\ee
and
\be	\label{VIII.14}
	\Cmat_{\thetavecsmall \xvec} = \begin{bmatrix}
		\Hmat^H & \sqrt{\frac{2}{\pi(\rho_q+\sigma_q^2)}} \Gmat^H
	\end{bmatrix},
\ee
respectively.

The auto-covariance matrix in \eqref{VIII.13} is a block matrix.
Therefore in order to calculate its inverse, we first note that using the Woodbury matrix identity (Eq. (0.7.4.1) \cite{horn2012matrix}) the inverse of the left upper block of $\Cmat_\xvec$, which is given in \eqref{VIII.1}, satisfies
\be \label{VIII.15}
	\begin{aligned}
		\Cmat_{\xvec_a}^{-1} &= \frac{1}{\sigma_a^2} \left( \Imat_{N_a} - \frac{1}{\sigma_a^2} \Hmat \left( \Imat_M + \frac{1}{\sigma_a^2} \Hmat^H \Hmat \right)^{-1} \Hmat^H \right) \\
		&=	\frac{1}{\sigma_a^2}\left( \Imat_{N_a} - \frac{1}{\rho_a n_a + \sigma_a^2} \Hmat \Hmat^H \right).
	\end{aligned}
\ee
where the last equality is obtained from \eqref{III.A.1} and \eqref{III.A.2} in Assumption A.2 which implies that $\Hmat^H \Hmat = \rho_a n_a \Imat_M$.
Second, by using \eqref{VIII.10}, \eqref{VIII.11}, and \eqref{VIII.15}, it can be verified that
\be \label{VIII.16}
	\begin{aligned}
		&\Cmat_{\xvec_q}- \Cmat_{\xvec_q \xvec_a}\Cmat_{\xvec_a}^{-1}\Cmat_{\xvec_a \xvec_q} = \alpha \Imat_{N_q} + (1-\alpha)\frac{1}{\rho_q} \Gmat \Gmat^H \\
		&\;- \frac{2}{\pi(\rho_q+\sigma_q^2)}\frac{1}{\sigma_a^2}\Gmat \Hmat^H \left( \Imat_{N_a} - \frac{1}{\rho_a n_a + \sigma_a^2} \Hmat \Hmat^H \right) \Hmat \Gmat^H \\
		&\; = \alpha \left(\Imat_{N_q} +\frac{\beta}{\alpha} \Gmat \Gmat^H \right) \define \Dmat,
	\end{aligned}
\ee
where $\beta$ is defined in \eqref{III.A.8} and using $\Hmat^H \Hmat = \rho_a n_a \Imat_M$.
By using the Woodbury matrix identity on \eqref{VIII.16}, the inverse matrix is given by:
\be \label{VIII.17}
	\begin{aligned}
		&\left(\Cmat_{\xvec_q}- \Cmat_{\xvec_q \xvec_a}\Cmat_{\xvec_a}^{-1}\Cmat_{\xvec_a \xvec_q}\right)^{-1} \\ &\hspace{0.5cm}= \frac{1}{\alpha} \left( \Imat_{N_q} -\frac{\beta}{\alpha}\Gmat \left(\Imat_M + \frac{\beta}{\alpha}\Gmat^H \Gmat\right)^{-1} \Gmat^H \right) \\
		&\hspace{0.5cm}= \frac{1}{\alpha} \left( \Imat_{N_q} - \frac{\beta}{\alpha + \beta \rho_q n_q} \Gmat \Gmat^H \right),
	\end{aligned}
\ee
where the last equality is obtained from \eqref{III.A.3} and \eqref{III.A.4} in Assumption A.3, which implies that $\Gmat^H \Gmat = \rho_q n_q \Imat_M$.
Using block matrix inversion together with the results in \eqref{VIII.15} and \eqref{VIII.17}, it can be verified that the inverse auto-covariance matrix in \eqref{VIII.13} is
\be \label{VIII.18}
	\begin{aligned}
		\Cmat_\xvec^{-1} &= \left[\begin{matrix*}[l] 
			\Cmat_{\xvec_a}^{-1}+\Cmat_{\xvec_a}^{-1}\Cmat_{\xvec_a \xvec_q} \Dmat^{-1} \Cmat_{\xvec_q \xvec_a} \Cmat_{\xvec_a}^{-1} &\vdots\\
			-\Dmat^{-1} \Cmat_{\xvec_q \xvec_a} \Cmat_{\xvec_a}^{-1} &\vdots 
		\end{matrix*} \right. \\
		&\hspace{3.9cm} \left. \begin{matrix*}[r] 
			-\Cmat_{\xvec_a}^{-1}\Cmat_{\xvec_a \xvec_q} \Dmat^{-1} \\
			\Dmat^{-1}
		\end{matrix*}\right]\\
		&=\left[\begin{matrix*}[l]
			\frac{1}{\sigma_a^2} \left( \Imat_{N_a} + \nu(n_a,n_q) \Hmat \Hmat^H \right) & \vdots \\
			-\xi(n_a,n_q) \Gmat \Hmat^H  &\vdots
		\end{matrix*} \right. \\
		&\hspace{2.5cm} \left.\begin{matrix*}[r]
			-\xi(n_a,n_q) \Hmat \Gmat^H  \\
			\frac{1}{\alpha} \left(\Imat_{N_q} - \frac{\beta(n_a)}{\alpha+ \beta(n_a) \rho_q n_q} \Gmat \Gmat^H\right)
		\end{matrix*}\right],
	\end{aligned}
\ee
where 
\be \label{VIII.19}
	\begin{aligned}
		&\nu(n_a,n_q) \define - \frac{1}{\rho_a n_a + \sigma_a^2} \\ &\hspace{0.5cm} +\frac{2 \rho_q n_q \sigma_a^2}{\pi (\rho_q+\sigma_q^2) (\alpha + \beta(n_a) \rho_q n_q ) (\rho_a n_a + \sigma_a^2)^2}
	\end{aligned}
\ee
and
\be \label{VIII.20}
	\xi(n_a,n_q) \define \sqrt{\frac{2}{\pi\left(\rho_q+\sigma_q^2\right)}} \frac{1}{(\alpha + \beta(n_a) \rho_q n_q ) (\rho_a n_a + \sigma_a^2)}.	
\ee
Substituting \eqref{VIII.14} and \eqref{VIII.18} into \eqref{II.B.1} and \eqref{II.B.2} we obtain \eqref{III.A.5} and \eqref{III.A.6}, respectively.

\section {Proof of Proposition \ref{Proposition}} \label{IX}
In this appendix we prove that for a given number of analog measurements, $n_a$, taking the maximum number of quantized measurements, $n_q$, under the power constraint is optimal in terms of minimizing the MSE, as in the optimization problem given in \eqref{III.C.1}.
We show that for any given value of $n_a$, we obtain that
\be \label{IX.1}
	MSE\vert_{n_a,n_q} - MSE\vert_{n_a,n_q+1} \geq 0,
\ee
meaning that the addition of a quantized measurement can only improve the MSE. Substituting the MSE in \eqref{III.A.6} into \eqref{IX.1}, we obtain that
\be \label{IX.2}
	\begin{aligned}
		& MSE\vert_{n_a,n_q} - MSE\vert_{n_a,n_q+1} \\
		&=M - M\left(\frac{\rho_a n_a }{\rho_a n_a + \sigma_a^2} \right.\\
		&\hspace{0.6cm} \left. +\frac{2\rho_q n_q \sigma_a^4}{\pi (\rho_q+\sigma_q^2) \left(\alpha + \beta(n_a)\rho_q n_q\right) \left(\rho_a n_a + \sigma_a^2\right)^2}\right) \\
		&\hspace{0.1cm}- \left[M - M\left(\frac{\rho_a n_a }{\rho_a n_a + \sigma_a^2} \right.\right.\\
		&\hspace{0.6cm} \left.\left. +\frac{2\rho_q (n_q+1) \sigma_a^4}{\pi (\rho_q+\sigma_q^2) \left(\alpha + \beta(n_a)\rho_q (n_q+1)\right) \left(\rho_a n_a + \sigma_a^2\right)^2}\right)\right]\\
		&= M \frac{2 \rho_q \sigma_a^4}{\pi (\rho_q + \sigma_q^2) (\rho_a n_a + \sigma_a^2)^2} \left[ \frac{n_q + 1}{\alpha +\beta(n_a) \rho_q(n_q+1)}\right. \\
		&\hspace{0.6cm} \left. - \frac{n_q}{\alpha +\beta(n_a) \rho_q n_q}\right]\\
		&= M \frac{2 \rho_q \sigma_a^4}{\pi (\rho_q + \sigma_q^2) (\rho_a n_a + \sigma_a^2)^2} \\
		&\hspace{0.9cm} \cdot \frac{\alpha}{(\alpha +\beta(n_a) \rho_q (n_q+1))(\alpha +\beta(n_a) \rho_q n_q)} \geq 0.
	\end{aligned}
\ee
In the following, we show that $\beta(n_a) > 0$. First, it can be seen that according to the definition of $\beta$ in \eqref{III.A.8} we have
\be \label{IX.3}
	\begin{aligned}
		\beta(n_a) &= \frac{2}{\pi\rho_q} \arcsin\left( \frac{\rho_q}{\rho_q+\sigma_q^2} \right) - \frac{2}{\pi\rho_q} \frac{\rho_a n_a}{\rho_a n_a + \sigma_a^2} \frac{\rho_q}{\rho_q +\sigma_q^2}\\
		&= \frac{2}{\pi\rho_q} \left( \arcsin\left( \frac{\rho_q}{\rho_q+\sigma_q^2} \right) - \frac{\rho_a n_a}{\rho_a n_a + \sigma_a^2} \frac{\rho_q}{\rho_q +\sigma_q^2} \right),
	\end{aligned}
\ee
where
\be \label{IX.4}
	0 \leq \frac{\rho_a n_a}{\rho_a n_a + \sigma_a^2} \leq 1,
\ee
and since $\rho_a>0$, $\sigma_a^2\geq 0$, and $n_a\geq 0$. In addition,
\be \label{IX.5}
	0< \frac{\rho_q}{\rho_q + \sigma_q^2} \leq 1,
\ee
since $\rho_q>0$ and $\sigma_q^2\geq 0$.
Therefore, due to \eqref{IX.4} and \eqref{IX.5}, the following expression is also positive
\be \label{IX.6}
	\arcsin\left( \frac{\rho_q}{\rho_q+\sigma_q^2} \right) - \frac{\rho_a n_a}{\rho_a n_a + \sigma_a^2} \frac{\rho_q}{\rho_q +\sigma_q^2} > 0,
\ee
and thus, we can conclude that $\beta(n_a)>0$.
Moreover, as a result of \eqref{IX.5}, we also have that
\be \label{IX.7}
	0 \leq \alpha < 1.
\ee
Since $n_a$, $n_q$, $\rho_a$, $\rho_q$, $\sigma_a^2$, $\sigma_q^2$, $\alpha$, and $\beta(n_a)$ in \eqref{IX.2} are non-negative, the inequality in \eqref{IX.1} holds.
Therefore, given the number of analog measurements, we take the maximum number of quantized measurements possible under the power constraint.
This in turn allows us to solve using a one-dimensional search over $n_a$ with the value of $n_q$ given in \eqref{III.C.3}.

\end{document}